\pgfplotsset{compat=1.14}
\newcommand{\red}{\textcolor[rgb]{1.00,0.00,0.00}}
\newcommand{\bl}{\textcolor[rgb]{0.00,0.00,1.00}}
\newtheorem{thm}{Theorem}
\newtheorem{Lemma}{Lemma}
\newtheorem{Rem}{Remark}
\newtheorem{Prop}{Proposition}
\newenvironment{proof}{{\indent \quad \it Proof:}}{\hfill $\blacksquare$\par}
\begin{document}

\title{Performance Analysis and Optimization of Network-Assisted Full-Duplex Systems under Low-Resolution ADCs}
\author{Xiangning~Song, Zhenhao~Ji, Jiamin~Li,~\IEEEmembership{Member,~IEEE}, Pengcheng~Zhu, 
        Dongming~Wang,~\IEEEmembership{Member,~IEEE}, 
        and~Xiaohu~You,~\IEEEmembership{Fellow,~IEEE}\\

\thanks{
\textit{Corresponding authors: Jiamin~Li (email: jiaminli@seu.edu.cn)}

This work was supported in part by the National Key R\&D Program of China under Grant 2021YFB2900300, and by the National Natural Science Foundation of China (NSFC) under Grants 61971127, 61871465, 61871122.

The authors are with National Mobile Communications Research Laboratory, Southeast University, Nanjing, {\rm 210096}, China (email: \{xnsong, zhenhao\_ji, jiaminli, p.zhu, wangdm, xhyu\}@seu.edu.cn). J. Li, D. Wang, and X. You are also with Purple Mountain Laboratories, Nanjing {\rm 211111}, China.}}

\markboth{IEEE SYSTEM JOURNAL,~2022}%
{X. Song \MakeLowercase{\textit{et al.}}: Performance Analysis and Optimization of Network-Assisted Full-Duplex Systems under Low-Resolution ADCs}

\maketitle

\begin{abstract}
Network-assisted full-duplex (NAFD) distributed massive multiple input multiple output (M-MIMO) enables the in-band full-duplex with existing half-duplex devices at the network level, which exceptionally improves spectral efficiency. This paper analyzes the impact of low-resolution analog-to-digital converters (ADCs) on NAFD distributed M-MIMO and designs an efficient bit allocation algorithm for low-resolution ADCs. The beamforming training mechanism relieves the heavy pilot overhead for channel estimation, which remarkably enhances system performance by guiding the interference cancellation and coherence detection. Furthermore, closed-form expressions for spectral and energy efficiency with low-resolution ADCs are derived. The multi-objective optimization problem (MOOP) for spectral and energy efficiency is solved by the deep Q network and the non-dominated sorting genetic algorithm II. The simulation results corroborate the theoretical derivation and verify the effectiveness of introducing low-resolution ADCs in NAFD distributed M-MIMO systems. Meanwhile, a set of Pareto-optimal solutions for ADC accuracy flexibly provide guidelines for deploying in a practical NAFD distributed M-MIMO system.
\end{abstract}

\begin{IEEEkeywords}
Low-resolution ADCs, beamforming training, channel estimation, NAFD distributed M-MIMO, MOOP.
\end{IEEEkeywords}

\section{Introduction}\label{sec:Int}
\IEEEPARstart{W}{ith} the introduction of big-data communication scenarios, the swift growth of wireless communication data puts forward urgent demand and brings severe challenges to the existing wireless systems on the channel capacity, latency, and quality of service (QoS) \cite{agiwal2016next}. Huh \emph{et al.} \cite{huh2012achieving} demonstrated the potential gain of massive multiple input multiple output (M-MIMO) on improving spectral efficiency (SE) and acceptable performance with low-complexity receiving and beamforming schemes such as {maximum ratio transmission and combining (MRT/MRC) and zero-forcing transmission and receiving (ZFT/ZFR)}, which provides the possibility for orders of magnitude more data transmission. For the geometric architecture, the distributed antenna system has emerged as a promising architecture for its proximity gains and enhanced macro-diversity\cite{you2010cooperative} with multiple separately-located remote antenna units (RAUs). {However, the high mobility of terminal devices and the heterogeneity of the network structure result in profound asymmetry flow between uplink (UL) and downlink (DL), while the sheer scale of data traffic entails an intolerable consumption of hardware resources.}

\subsection{Related Works and Motivation}
To meet the higher requirement for flexible duplex and asymmetric data traffic, Wang \emph{et al.} \cite{wang2019performance} introduced the conceptual description of network-assisted full-duplex (NAFD) system, which achieves in-band full-duplex with the half-duplex devices and promotes flexible traffic control by dynamically allocating the UL and DL RAUs, i.e., different RAUs have different operational modes, enabling DL transmission and UL reception simultaneously on the same frequency band. Compared to the co-frequency co-time full-duplex (CCFD), NAFD introduces a geographic separation between RAUs and thus weakens the self-interference of intra-RAUs. Nevertheless, the cross-link interference (CLI) via the inter-RAUs UL-to-DL channel seriously affects the system performance and needs to be eliminated. Moreover, channel reciprocity is generally adopted for the classical time division duplex to mitigate the heavy burden of DL training and eliminate the channel state information (CSI) error. In contrast, for the cell-free MIMO, few RAUs serve a given user in a relatively wide area, which immensely weakens the channel-hardening phenomenon \cite{li2018benefits}. Inspired by the beamforming training scheme, the potential benefits of DL pilot estimation become more lavish compared to the statistical CSI in the light channel-hardening scenario \cite{interdonato2019downlink}. Further, Li. \emph{et al.} \cite{li2020network} verified the beamforming training in NAFD distributed M-MIMO with tractable closed-form SE. However, when NAFD distributed M-MIMO employs a perfect quantization scheme, intolerable energy consumption severely limits its future deployment.

To obtain the bonus of M-MIMO, analog-to-digital converters (ADCs) face new challenges in the evolution of the wireless network, where the entire system suffers from the massive processing capacity by power-hungry ADCs with high quantization bits. Low-resolution ADCs are verified as the panacea to break through those bottlenecks in cell-free MIMO system \cite{hu2019cell}, which aimed to maximize the ergodic sum-rate under limited ADC bit and investigated that large-scale antennas compensated for performance degradation by low-resolution quantization. Yuan \emph{et al.} \cite{yuan2017distributed} deduced the SE of the distributed MIMO scenario with MRC receiver and proved that SE has an upward trend with the increase of quantization bit. Nevertheless, in NAFD distributed M-MIMO, no work still exists on whether introducing low-resolution ADCs leads to severe debilitation on system SE. For other basic criteria to evaluate the low-resolution ADCs, combinatorial optimization is proposed to maximize energy efficiency (EE). More realistically, Li \emph{et al.} \cite{li2018spectral} investigated the asymptotic achievable rates with ZFR/MRC receivers in the cell-free MIMO and issued the tradeoff between SE and EE under pilot contamination. Exploring SE/EE tradeoff to obtain the feasible ADC quantization scheme in the NAFD systems is still blank in current academic research yet.

\subsection{Contribution}
Motivated by the above observations, this paper first considers the interference cancellation and beamforming training scheme in NAFD distributed M-MIMO with low-resolution ADCs. The main contributions of this paper are as follows:
\begin{itemize}
	\item With the quantization effect, UL and DL signal transmissions in NAFD distributed M-MIMO systems are modeled. The two-stage channel estimation is proposed, including pilot training and beamforming training. Moreover, SE\&EE for the NAFD distributed M-MIMO with low-resolution ADCs are discussed.
	\item The multi-objective optimization problem (MOOP) for maximizing SE and EE is proposed. The joint optimization is yielded by deep Q network (DQN) and non-dominated sorting genetic algorithm II (NSGA-II) for flexible bit allocation schemes.
	\item Monte Carlo simulation verifies the theoretical derivation with numerical results and investigates the influence of quantization noise, proving the desirability of low-resolution ADCs in the NAFD distributed M-MIMO.
\end{itemize}

\subsection{Organization and Notations}
The structure of this paper is organized as follows. Sec.~\ref{sec:system model} and Sec.~\ref{sec:ce} introduces the system model and performs channel estimation. In Sec.~\ref{sec:SE and EE}, we derive the closed-form expressions of the system SE/EE. In Sec.~\ref{sec:algorithm}, an effective ADC bit allocation algorithm is proposed. Sec.~\ref{sec:numerical result} gives the simulation results to verify theoretical derivation and draws conclusions. Sec.~\ref{sec:conclusion} summarizes the whole article. 

\emph{Notations:} Boldface lower and upper case letters denote vectors and matrices. $(\cdot)^\mathrm{H}$, $(\cdot)^\mathrm{T}$, $\mathbb{E}[\cdot]$, $\mathrm{cov}(\cdot)$ indicate the conjugate transpose, transpose, expectation and covariance operators, respectively. $\mathrm{diag}(\mathbf{A})$ returns the diagonal elements of $\mathbf{A}$, $\mathbf{x}\sim \mathcal{CN}(0,\mathbf{R})$ represents a circularly symmetric complex Gaussian vector $\mathbf{x}$ with zero mean and covariance $\mathbf{R}$. $\left| \cdot \right| $ and $\left\| \cdot\right\| $ represent the absolute value and spectral norm. $\Gamma(k,\theta)$ and $\mathrm{Nakagami}(k,\theta)$ denote the Gamma and Nakagami distribution with parameters $k$ and $\theta$. {For the abbreviations commonly used in equations, $(\cdot)_\mathrm{UL}$, $(\cdot)_\mathrm{DL}$, $(\cdot)_\mathrm{UP}$, $(\cdot)_\mathrm{DP}$, $(\cdot)_q$, $ (\cdot)^p$, $(\cdot)^\omega$, $(\cdot)_e$, $(\cdot)_\mathrm{I}$, $(\cdot)^\mathrm{pre}$ represents the UL parameter, the DL parameter, the UL pilot parameter, the DL pilot parameter, the quantized parameter, the pilot paramter, the parameter in UL/DL mode, the effective parameter, the parameter related to interference channel, and the parameter in MR/ZF scheme.}


\section{System Analysis of NAFD Distributed M-MIMO}\label{sec:system model}
Fig.~\ref{system model} illustrates an NAFD M-MIMO system under low-resolution ADCs. Distributed RAUs jointly serve both UL and DL in the same time-frequency resources. UL/DL users maintain transmission simultaneously \cite{wang2019performance}. The system with low-resolution ADCs contains randomly distributed $N$ RAUs with $M$ half-duplex antennas and $K$ users with one half-duplex antenna, including $K_{\mathrm{UL}}$ UL users and $K_{\mathrm{DL}}$ DL users. {At each moment, there exist $N_{\mathrm{UL}}$ RAUs for UL reception from the UL users to UL RAUs and $N_{\mathrm{DL}}$ RAUs for DL transmission from DL RAUs to DL users, where $N_{\mathrm{UL}}+N_{\mathrm{DL}}=N$. Meanwhile, two types of interference channels, inter-user and inter-RAU between UL and DL, are considered. }
\begin{figure}[htbp]
	\centering  
	\includegraphics[width=0.4\textwidth]{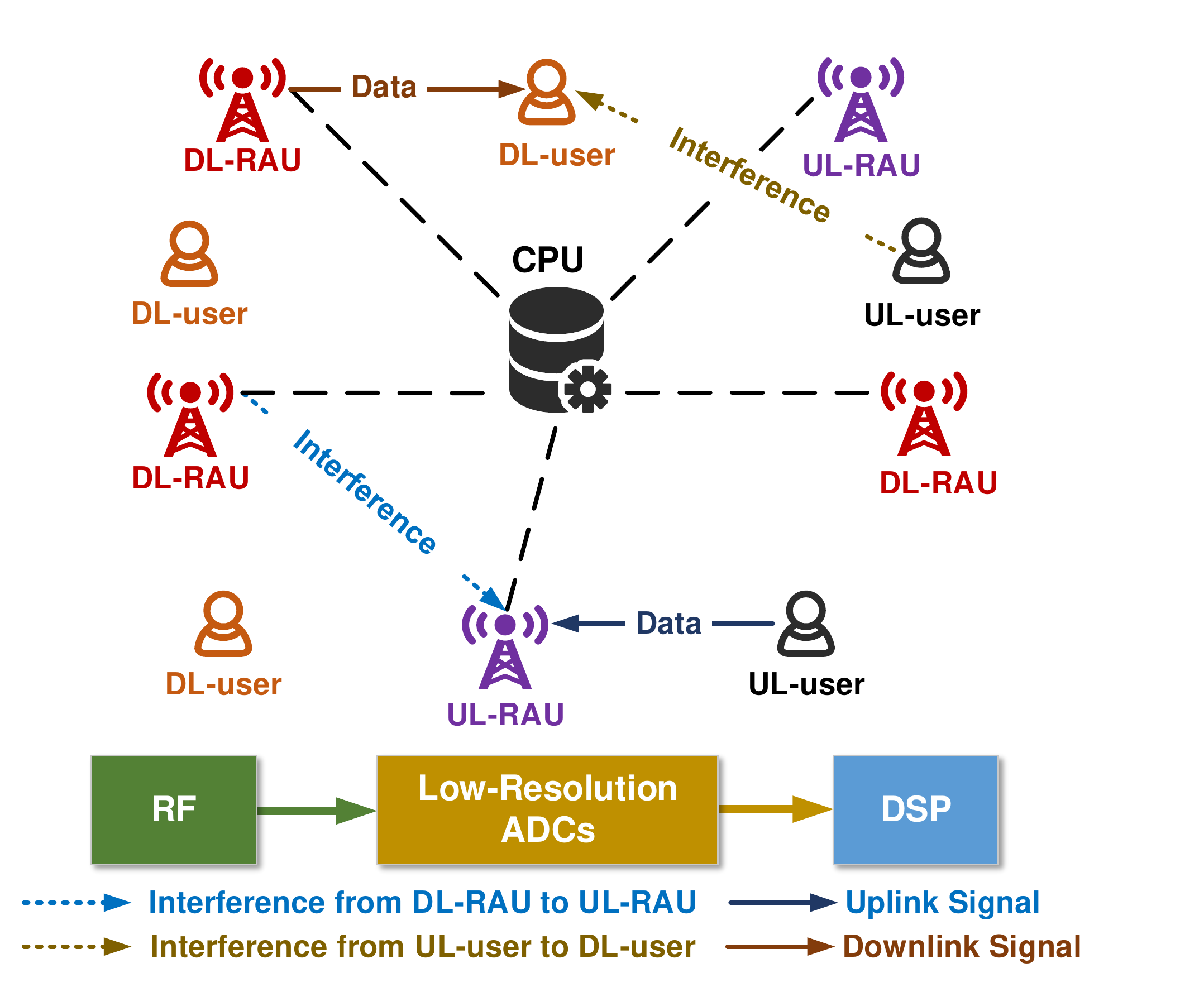}  
	\caption{System Model of NAFD Distributed M-MIMO Systems.} 
	\label{system model} 
\end{figure}
{Compared to conventional cell-free MIMO like \cite{datta2021full,anokye2020low,nguyen2020spectral,vu2019full}, NAFD's RAUs are all half-duplex, but the duplex mode of each RAU can be dynamically adjusted according to the network throughput rate, reliability or latency, enabling full duplex at the network level, which results in better SE performance than CCFD systems with suitable interference cancellation, and flexible adjustment of UL and DL RAUs to cope with transmission asymmetries effectively.}

\subsection{DL Signal Transmission with Low-Resolution ADCs}\label{subsec:dsm}
$N_{\mathrm{DL}}$ RAUs send data pre-coded by CPU in DL, and $K_{\mathrm{DL}}$ users receive signal. The data received by the $k$-th user is
\begin{equation}\label{equ:rdl}
r_{\mathrm{DL},k} = \sum_{i=1}^{K_{\mathrm{DL}}}\sqrt{p_{\mathrm{DL}}}\mu_{k,i}s_i+\sum_{j=1}^{K_{\mathrm{UL}}}\sqrt{p_{\mathrm{UL}}}u_{\mathrm{I},k,j}x_j+n_{\mathrm{DL}}.
\end{equation}
{where $p_{\mathrm{DL/UL}}$ represents the DL or UL transmit power, respectively. $s_i$ is the data signal with $\mathbb{E}[s_is_i^\mathrm{H}] = 1$ and $x_j$ is the data signal transmitted by the $j$-th UL user with
$\mathbb{E}[x_kx_k^{\mathrm{H}}] = 1$. Auxiliary channel $\mu_{k,i} = \mathbf{g}_{\mathrm{DL},k}^\mathrm{H}\mathbf{w}_i$ is obtained by the beamforming training mechanism. The pre-coding vectors $\mathbf{w}_i \in \mathbb{C}^{N_{\mathrm{DL}}M\times 1}$ for MRT and ZFT, are $\frac{\hat{\mathbf{g}}_{\mathrm{DL},i}}{\left\| \hat{\mathbf{g}}_{\mathrm{DL},i}\right\| }$ and $\frac{\hat{\mathbf{g}}_{\mathrm{DL},i}(\hat{\mathbf{g}}_{\mathrm{DL},i}^\mathrm{H}\hat{\mathbf{g}}_{\mathrm{DL},i})^{-1}}{\left\| \hat{\mathbf{g}}_{\mathrm{DL},i}(\hat{\mathbf{g}}_{\mathrm{DL},i}^\mathrm{H}\hat{\mathbf{g}}_{\mathrm{DL},i})^{-1}\right\| }$, respectively. \bl{$\mathbf{g}_{\mathrm{DL},i} = [\mathbf{g}_{\mathrm{DL},1,i}^\mathrm{H},\dots,\mathbf{g}_{\mathrm{DL},N_\mathrm{DL},i}^{\mathrm{H}}] \in \mathbb{C}^{N_{\mathrm{DL}}M\times 1}$ is the channel of the $i$-th DL user, $\hat{\mathbf{g}}_{\mathrm{DL},i}$ is the estimation of $\mathbf{g}_{\mathrm{DL},i}$, and $\mathbf{g}_{\mathrm{DL},n,i}=\lambda_{\mathrm{DL},n,i}^{1/2}\mathbf{h}_{\mathrm{DL},n,i} \in \mathbb{C}^{M\times 1}$ is the channel from the $n$-th DL RAU to the $i$-th DL user, where simplified free-space model is adopted \cite{goldsmith2005wireless}.} $\lambda_{\mathrm{DL},n,i} = d_{n,i}^{-\alpha_{\mathrm{DL}}}$ is the large-scale fading, $d_{n,i}$ is the distance between the $n$-th RAU and the $i$-th user, and $\alpha_{\mathrm{DL}}$ is the path loss. $\mathbf{h}_{\mathrm{DL},n,i}$ represents the small-scale fading modeled as Rayleigh channel \cite{li2015spectral,li2020network}. Interference channel between UL user $j$ and DL user $k$ is defined as $u_{\mathrm{I},k,j}=\lambda_{\mathrm{I},k,j}^{1/2}h_{\mathrm{I},k,j}$, and its parameters are in one-to-one correspondence with the previous definitions \cite{li2020network}. $n_{\mathrm{DL}}\sim\mathcal{CN}(0,1)$ is the DL complex additive white Gaussian noise (AWGN). The low-resolution ADCs employed on the receiving antenna quantize the analog signal. Additive quantization noise model (AQNM) \cite{fletcher2007robust} is adopted for the analysis of quantization effect, and quantizes Eq.~(\ref{equ:rdl}) as}
\begin{equation}\label{dl quantized signal}
r_{q,\mathrm{DL},k} = \xi_k r_{\mathrm{DL},k}+n_{q,\mathrm{DL},k},\\
\end{equation}
where $\xi_k=1-\rho_k$ and $n_{q,\mathrm{DL},k}$ is additive quantization Gaussian noise (AQGN) with covariance matrix $\mathbf{C}_{n_{q,\mathrm{DL},k}} = \xi_k\rho_k\mathrm{diag}(r_{\mathrm{DL},k}r_{\mathrm{DL},k}^\mathrm{H})$. $\rho_k$ is the reciprocal of the signal-to-quantization noise ratio as
\begin{table}[htp]
	\centering  
	\caption{Quantization Coefficient $\rho_k$ with $b_k$ Bits.}
	\label{table1}
	\begin{tabular}{cccccc}
		\hline
		$b$   & $1$ & $2$ &$3$ &$4$ &$\geqslant 5$\\
		\hline		
		$\rho$ & $0.3634$ &$0.1175$&$0.0345$&$0.0095$&$\sqrt{3}/2\pi\cdot 4^{-b}$\\		
		\hline
	\end{tabular} 	
\end{table}
\subsection{UL Signal Transmission with Low-Resolution ADCs}\label{subsec:usm}
$K_\mathrm{UL}$ users send the signal concurrently in UL, and $N_\mathrm{UL}$ RAUs receive jointly as
\begin{equation}\label{ulmodel}
\mathbf{r}_{\mathrm{UL}} = \sum_{i=1}^{K_{\mathrm{UL}}}\sqrt{p_{\mathrm{UL}}}\mathbf{g}_{\mathrm{UL},i}x_i+\sum_{j=1}^{K_{\mathrm{DL}}}\sqrt{p_{\mathrm{DL}}}\mathbf{f}_js_j+\mathbf{n}_{\mathrm{UL}},
\end{equation}
where $\mathbf{g}_{\mathrm{UL},i} = [\mathbf{g}_{\mathrm{UL},1,i}^{\mathrm{H}},\dots,\mathbf{g}_{\mathrm{UL},N_\mathrm{UL},i}^{\mathrm{H}}] \in \mathbb{C}^{N_{\mathrm{UL}}M\times 1}$, and $\mathbf{g}_{\mathrm{UL},n,i}=\lambda_{\mathrm{UL},n,i}^{1/2}\mathbf{h}_{\mathrm{UL},n,i} \in \mathbb{C}^{M\times 1}$ is the channel from the $n$-th UL RAU to the $i$-th UL user. $\mathbf{n}_{\mathrm{UL}}$ represents the UL complex AWGN. {$\mathbf{G}_{\mathrm{I}} \in \mathbb{C}^{N_{\mathrm{UL}}M\times N_{\mathrm{DL}}M} $ is the interference channel, and $\mathbf{g}_{\mathrm{I},i,j}=\lambda_{\mathrm{I},{i},{j}}^{1/2}\mathbf{h}_{\mathrm{I},i,j}\in\mathbb{C}^{M\times M}$ is the interference channel between $i$-th UL RAU and $j$-th DL RAU \cite{li2020network}.} Denote $\mathbf{f}_j=\mathbf{G}_{\mathrm{I}}\mathbf{w}_j \in \mathbb{C}^{N_{\mathrm{UL}}M\times 1}$ as auxiliary interference channel. Received by the UL RAU through the low-resolution ADCs, the UL quantized signal is
\begin{equation}
\mathbf{r}_{q,\mathrm{UL}} = \mathbf{A} \mathbf{r}_{\mathrm{UL}}+\mathbf{n}_{q,\mathrm{UL}},
\end{equation}
where $\mathbf{A}\!=\!\mathrm{diag}(\alpha_1,\alpha_2,\ldots,\alpha_{N_{\mathrm{UL}}})\otimes \mathbf{I}_{M} \in \mathbb{C}^{N_{\mathrm{UL}}M\times N_{\mathrm{UL}}M}$. $\alpha_n=1-\rho_n$ describes the ADC resolution of the $n$-th UL RAU. $\mathbf{n}_{q,\mathrm{UL}}$ is AQGN with $\mathbf{C}_{\mathbf{n}_{q,\mathrm{UL}}} = \mathbf{A}(\mathbf{I}-\mathbf{A})\mathrm{diag}(\mathbf{r}_{\mathrm{UL}}\mathbf{r}_{\mathrm{UL}}^\mathrm{H})$.


\section{Two-Stage Channel Estimation Scheme}\label{sec:ce}
In this section, we propose a two-stage channel estimation scheme, UL pilot estimation, and beamforming training estimation. All users send pilot sequences for UL pilot estimation within the first $\tau_1$ time slots ($\tau_1 \ge K$), while all RAUs send them to the CPU for channel estimation. For beamforming training within the second $\tau_2$ time slots, all DL RAUs send pre-coded pilot, while all DL users and UL RAUs perform channel estimation, respectively. And then, the last $T-\tau_1-\tau_2$ time slots are used for data transmission. 

{In the NAFD M-MIMO system, interference between DL RAUs and UL RAUs is the main factor that reduces the UL SE when some RAUs are used for UL reception while others are used for DL transmission. To improve the SE, an interference cancellation mechanism based on beamforming training is proposed. Specifically, the CPU estimates the equivalent interference channel between the RAUs, reconstructs the interference signal using the estimated equivalent CSI, and finally performs interference cancellation by subtracting the reconstructed interference signal from the received signal.}

\subsection{UL Pilot Estimation}
In the pilot estimation stage, the $k$-th user sends the orthogonal pilot as $\bm{\phi}_k\in\mathbb {C}^{\tau_1\times 1}$. $n$-th RAU receives
\begin{equation}
\begin{aligned}
\mathbf{r}_{n}^{p,\omega}&=\sqrt{p_{\mathrm{UP}}}\sum_{k=1}^{K} \mathbf{g}^{\omega}_{n,k}\phi_{k}+\mathbf{n}^{p}_{n},
\end{aligned}
\end{equation}
where $p_\mathrm{UP}$ is the power of UL pilot and $\mathbf{g}^{\omega}_{n,k} \in \mathbb{C}^{M\times K} $ is the channel from $K$ user to the $n$-th RAU in UL or DL operating mode. $\mathbf{n}^{p}_{n}$ is AWGN, and its element obeys i.i.d. $\mathcal{CN}(0,\sigma_{\mathrm{UP}}^2)$. The quantized signal is 
\begin{equation}
\widetilde{\mathbf{r}}_{n}^{p,\omega}=\theta_n^\omega\left(\sqrt{p_{\mathrm{UP}}}\sum_{k=1}^{K} \mathbf{g}^{\omega}_{n,k}\phi_{k}+\mathbf{n}^p_{n}\right)+\widetilde{\mathbf{n}}^{p,\omega}_{n},
\end{equation}
where $\theta_n^\omega$ depends on the ADC quantization bits on the $n$-th RAU, and $\widetilde{\mathbf{n}}^{p,\omega}_{n}$ is AQGN. As for the $k$-th user,
\begin{equation}
\widetilde{\bm{r}}_{n,k}^{p,\omega}=\widetilde{\mathbf{r}}_{n}^{p,\omega}\phi^{\mathrm{H}}_{k}=\sqrt{p_{\mathrm{UP}}}\theta_n^\omega\mathbf{g}^{\omega}_{n,k}+\theta_n^\omega\bm{n}^p_{n,k}+\widetilde{\bm{n}}^{p,\omega}_{n,k}.
\end{equation}
Minimum mean square error (MMSE) estimation with estimated error $\widetilde{\mathbf{g}}^{\omega}_{n,k}={\mathbf{g}}^{\omega}_{n,k}-\hat{\mathbf{g}}^{\omega}_{n,k}$ is adopted by CPU \cite{hu2019cell} as
\begin{equation}\label{ghat}
\hat{\mathbf{g}}^{\omega}_{n,k}=\mathbb{E}\left[{\mathbf{g}}^{\omega}_{n,k}\right] +\frac{\mathrm{cov}({\mathbf{g}}^{\omega}_{n,k},	\tilde{\bm{r}}^{p,{\omega}}_{n,k})}{\mathrm{cov}(\tilde{\bm{r}}^{p,{\omega}}_{n,k},\tilde{\bm{r}}^{p,{\omega}}_{n,k})}\left(\tilde{\bm{r}}^{p,{\omega}}_{n,k}-\mathbb{E}\left[ 	\tilde{\bm{r}}^{p,{\omega}}_{n,k}\right] \right), 
\end{equation}
where $\widetilde{\mathbf{g}}^{\omega}_{n,k}$ and $\hat{\mathbf{g}}^{\omega}_{n,k}$ are both complex Gaussian random vectors with zero mean and variances as
\begin{subequations}
\begin{align}
\beta_{\omega,n,k}&=\frac{{p_\mathrm{UP}\theta_n^\omega}\lambda_{\omega,n,k}^2}{{p_\mathrm{UP} \lambda_{\omega,n,k}+\sigma_\mathrm{UP}^2}},\\
\eta_{\omega,n,k}&=\frac{(1-\theta_n^\omega)p_\mathrm{UP}\lambda_{\omega,n,k}+\lambda_{\omega,n,k}\sigma_{\mathrm{UP}}^2}{p_\mathrm{UP}\lambda_{\omega,n,k}+\sigma_\mathrm{UP}^2}.
\end{align}
\end{subequations}
Due to limited space, the details can be referred to \cite{hu2019cell}.

\subsection{Beamforming Training Estimation}
In the beamforming training stage, DL RAUs send pilot. DL users and UL RAUs estimate the equivalent channels $\mu_{k,i},\mathbf{f}_j$. 
\subsubsection{DL User Estimation} 
RAUs send $\mathbf{W}\mathbf{P}\bm{\Psi}$, where $\bm{\Psi}\in \mathbb{C}^{K_{\mathrm{DL}} \times\tau_2} $ is orthogonal pilot sequence. Define $\mathbf{W}=[\mathbf{w}_1,...,\mathbf{w}_{K_\mathrm{DL}}] \in \mathbb{C}^{{N_{\mathrm{DL}}M \times K_{\mathrm{DL}} }} $ as the beamforming matrix and $\mathbf{P}=\mathrm{diag}(\sqrt{p_\mathrm{DP,1}},\ldots,\sqrt{p_{\mathrm{DP},K_{\mathrm{DL}}}})$ as the pilot power matrix, where $p_\mathrm{DP}$ is the power of DL pilot. The received signal of the $k$-th DL user is
\begin{equation}
\mathbf{r}_{k}^{p}=\mathbf{g}_{\mathrm{DL},k}^\mathrm{H}\mathbf{W}\mathbf{P}\bm{\Psi}+\mathbf{n}_\mathrm{DP},
\end{equation}
where $\mathbf{n}_\mathrm{DP}$ is AWGN, and its element obeys i.i.d. $\mathcal{CN}(0,\sigma_{\mathrm{DP}}^2)$. The quantized signal is
\begin{equation}
\tilde{\mathbf{r}}_{k}^{p}=\xi_k\widetilde{\mathbf{P}}\bm{\Psi}+\xi_k\mathbf{n}_\mathrm{DP}+\tilde{\mathbf{n}}_\mathrm{DP},
\end{equation}
where $\widetilde{\mathbf{P}}=[\sqrt{p_\mathrm{DP,1}}\mu_{k,1},\ldots,\sqrt{p_{\mathrm{DP},K_{\mathrm{DL}}}}\mu_{k,K_{\mathrm{DL}}}]$. Correlating the quantized signal with $\bm{\Psi}^\mathrm{H}$, we have the effective signal as
\begin{equation}\label{effective signal}
\tilde{\mathbf{r}}_{e,\mathrm{k}}^{p}=\tilde{\mathbf{r}}_{k}^{p}\bm{\Psi}^\mathrm{H}=\xi_k\widetilde{\mathbf{P}}+\xi_k\mathbf{n}_{e,\mathrm{DP}}+\tilde{\mathbf{n}}_{e,\mathrm{DP}},
\end{equation}
where $\mathbf{n}_{e,\mathrm{DP}}=\mathbf{n}_\mathrm{DP}\bm{\Psi}^\mathrm{H}$, $\tilde{\mathbf{n}}_{e,\mathrm{DP}}=\tilde{\mathbf{n}}_{e,\mathrm{DP}}\bm{\Psi}^\mathrm{H}$. In addition, we obtain the estimation of $\mu_{k,i}$ by MMSE algorithm as
\begin{equation}\label{equ:muhat}
\hat{\mu}_{k,i}=\mathbb{E}\left[ \mu_{k,i}\right] +\frac{\mathrm{cov}(\mu_{k,i},	\tilde{\bm{r}}_{e,k}^{p,i})}{\mathrm{cov}(\tilde{\bm{r}}_{e,k}^{p,i},	\tilde{\bm{r}}_{e,k}^{p,i})}\left(\tilde{\bm{r}}_{e,k}^{p,i}-\mathbb{E}\left[ 	\tilde{\bm{r}}_{e,k}^{p,i}\right] \right), 
\end{equation}
where $\tilde{\bm{r}}_{e,k}^{p,i}$ is the $i$-th column of $\tilde{\mathbf{r}}_{e,k}^{p}$. With Lemma~\ref{Lemma 1} in Appendix~\ref{appendix 1}, $
||\hat{\mathbf{g}}_{\mathrm{DL},k}||^2\sim \Gamma(k_{\mathrm{DL},k},\theta_{\mathrm{DL},k}),||\widetilde{\mathbf{g}}_{\mathrm{DL},k}||^2\sim \Gamma(\tilde{k}_{\mathrm{DL},k},\tilde{\theta}_{\mathrm{DL},k})$. The details of the distribution parameters are provided in Appendix~\ref{appendix 2}.
\begin{Prop} \label{Lemma2}
With MRT/ZFT pre-coders ($\mathrm{pre}\in\{\mathrm{MR,ZF}\}$) on DL RAUs, MMSE estimation can be expressed as
\begin{small}
\begin{equation}
\hat{\mu}_{k,i}^{\mathrm{pre}}=
\left\{
\begin{aligned}
&E_k^\mathrm{pre}+\frac{\sqrt{p_{\mathrm{DP}}}\bar{\chi}_k^\mathrm{pre}\tilde{\bm{r}}_{e,k}^{p,i}-\xi_k\bar{\chi}_k^\mathrm{pre}{p_{\mathrm{DP}}}E_k^\mathrm{pre}}{p_{\mathrm{DP}}\widetilde{\chi}^{\mathrm{pre}}_k+\sigma_{\mathrm{DP}}^2},i \neq k \\
&\frac{\sqrt{p_{\mathrm{DP}}}{\chi}_k^\mathrm{pre}}{{p_{\mathrm{DP}}}{\chi}_k^\mathrm{pre}+\sigma_{\mathrm{DP}}^2}, i=k
\end{aligned}
\right..
\end{equation}
\end{small}
\begin{itemize}
	\item For MRT pre-coding, 
	\begin{small}
	\begin{subequations}
	\begin{align}
	E_k^\mathrm{MR}&=\frac{\Gamma\left( k_{\mathrm{DL},k}+{1}/{2}\right) }{\Gamma\left( k_{\mathrm{DL},k}\right)}\theta_{\mathrm{DL},k}^{1/2},\\
	\widetilde{\chi}^{\mathrm{MR}}_k&=\bar{\chi}_k^\mathrm{MR}+(1-\xi_k)(E_k^\mathrm{MR})^2\\
	\bar{\chi}_k^\mathrm{MR}&=k_{\mathrm{DL},k}\theta_{\mathrm{DL},k}-\left( E_k^\mathrm{MR}\right) ^2+\frac{\tilde{k}_{\mathrm{DL},k}\tilde{\theta}_{\mathrm{DL},k}}{N_{\mathrm{DL}}M},\\
	\chi_k^\mathrm{MR}&=\frac{ k_{\mathrm{DL},k}\theta_{\mathrm{DL},k}+\tilde{k}_{\mathrm{DL},k}\tilde{\theta}_{\mathrm{DL},k}}{N_{\mathrm{DL}}M},
	\end{align}
	\end{subequations}
	\end{small}

	\item For ZFT pre-coding,
	\begin{small}
	\begin{subequations}
	\begin{align}
	E_k^\mathrm{ZF}&=\frac{\Gamma\left( t_{\mathrm{DL}}k_{\mathrm{DL},k}+{1}/{2}\right) }{\Gamma\left( t_{\mathrm{DL}}k_{\mathrm{DL},k}\right)}\theta_{\mathrm{DL},k}^{1/2},\\
	\widetilde{\chi}^{\mathrm{ZF}}_k&=\bar{\chi}_k^\mathrm{ZF}+(1-\xi_k)(E_k^\mathrm{ZF})^2\\
	\bar{\chi}_k^\mathrm{ZF}&=t_{\mathrm{DL}}k_{\mathrm{DL},k}\theta_{\mathrm{DL},k}-\left( E_k^\mathrm{ZF}\right)^2+\frac{\tilde{k}_{\mathrm{DL},k}\tilde{\theta}_{\mathrm{DL},k}}{N_{\mathrm{DL}}M},\\
	\chi_k^\mathrm{ZF}&=\frac{k_{\mathrm{DL},k}\theta_{\mathrm{DL},k}+\tilde{k}_{\mathrm{DL},k}\tilde{\theta}_{\mathrm{DL},k}}{N_{\mathrm{DL}}M},\\
	t_{\mathrm{DL/UL}}&=\frac{N_{\mathrm{DL/UL}}M-K_{\mathrm{DL/UL}}+1}{N_{\mathrm{DL/UL}}M}.
	\end{align}
	\end{subequations}
	\end{small}
\end{itemize}
\end{Prop}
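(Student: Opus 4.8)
The plan is to evaluate the three statistics that the linear MMSE estimator in Eq.~(\ref{equ:muhat}) needs --- the mean $\mathbb{E}[\mu_{k,i}]$, the cross-covariance $\mathrm{cov}(\mu_{k,i},\tilde{\bm{r}}_{e,k}^{p,i})$, and the observation variance $\mathrm{cov}(\tilde{\bm{r}}_{e,k}^{p,i},\tilde{\bm{r}}_{e,k}^{p,i})$ --- and then substitute them back. From the effective signal in Eq.~(\ref{effective signal}) the scalar observation is $\tilde{\bm{r}}_{e,k}^{p,i}=\xi_k\sqrt{p_{\mathrm{DP}}}\,\mu_{k,i}+\xi_k n_{e,\mathrm{DP},i}+\tilde{n}_{e,\mathrm{DP},i}$ with the three terms mutually uncorrelated, so each statistic collapses to a moment of $\mu_{k,i}$ plus a noise contribution: $\mathbb{E}[\tilde{\bm{r}}_{e,k}^{p,i}]=\xi_k\sqrt{p_{\mathrm{DP}}}\,\mathbb{E}[\mu_{k,i}]$, $\mathrm{cov}(\mu_{k,i},\tilde{\bm{r}}_{e,k}^{p,i})=\xi_k\sqrt{p_{\mathrm{DP}}}\,\mathrm{var}(\mu_{k,i})$, and $\mathrm{var}(\tilde{\bm{r}}_{e,k}^{p,i})=\xi_k^2 p_{\mathrm{DP}}\,\mathrm{var}(\mu_{k,i})+\xi_k^2\sigma_{\mathrm{DP}}^2+\mathrm{var}(\tilde{n}_{e,\mathrm{DP},i})$. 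The whole proposition follows once the first two moments of $\mu_{k,i}$ and the AQGN power are in hand.

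Next I would compute those moments by splitting on whether the precoding index matches the user and writing $\mathbf{g}_{\mathrm{DL},k}=\hat{\mathbf{g}}_{\mathrm{DL},k}+\widetilde{\mathbf{g}}_{\mathrm{DL},k}$ with the MMSE orthogonality $\mathbb{E}[\widetilde{\mathbf{g}}_{\mathrm{DL},k}^{\mathrm{H}}\hat{\mathbf{g}}_{\mathrm{DL},k}]=0$. For the matched beam, MRT gives $\mu_{k,k}=\|\hat{\mathbf{g}}_{\mathrm{DL},k}\|+\widetilde{\mathbf{g}}_{\mathrm{DL},k}^{\mathrm{H}}\hat{\mathbf{g}}_{\mathrm{DL},k}/\|\hat{\mathbf{g}}_{\mathrm{DL},k}\|$; using $\|\hat{\mathbf{g}}_{\mathrm{DL},k}\|^2\sim\Gamma(k_{\mathrm{DL},k},\theta_{\mathrm{DL},k})$ from Lemma~\ref{Lemma 1} and the half-integer Gamma moment $\mathbb{E}[X^{1/2}]=\frac{\Gamma(k+1/2)}{\Gamma(k)}\theta^{1/2}$ yields $\mathbb{E}[\mu_{k,k}]=\mathbb{E}[\|\hat{\mathbf{g}}_{\mathrm{DL},k}\|]=E_k^{\mathrm{MR}}$, while the Gamma mean $\mathbb{E}[\|\hat{\mathbf{g}}_{\mathrm{DL},k}\|^2]=k_{\mathrm{DL},k}\theta_{\mathrm{DL},k}$ together with the isotropic-projection identity $\mathbb{E}[|\widetilde{\mathbf{g}}_{\mathrm{DL},k}^{\mathrm{H}}\hat{\mathbf{g}}_{\mathrm{DL},k}|^2/\|\hat{\mathbf{g}}_{\mathrm{DL},k}\|^2]=\tilde{k}_{\mathrm{DL},k}\tilde{\theta}_{\mathrm{DL},k}/(N_{\mathrm{DL}}M)$ gives the second moment; subtracting $(E_k^{\mathrm{MR}})^2$ produces $\mathrm{var}(\mu_{k,k})=\bar{\chi}_k^{\mathrm{MR}}$. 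For a mismatched beam, independence of $\hat{\mathbf{g}}_{\mathrm{DL},i}$ from $\mathbf{g}_{\mathrm{DL},k}$ forces $\mathbb{E}[\mu_{k,i}]=0$, and the same quadratic-form argument gives $\mathrm{var}(\mu_{k,i})=(k_{\mathrm{DL},k}\theta_{\mathrm{DL},k}+\tilde{k}_{\mathrm{DL},k}\tilde{\theta}_{\mathrm{DL},k})/(N_{\mathrm{DL}}M)=\chi_k^{\mathrm{MR}}$. The ZFT case is structurally identical, the only change being that the ZF effective-channel norm carries the reduced shape parameter $t_{\mathrm{DL}}k_{\mathrm{DL},k}$ (the $K_{\mathrm{DL}}-1$ nulling constraints cost degrees of freedom), which propagates through the same two moment formulas to give $E_k^{\mathrm{ZF}}$, $\bar{\chi}_k^{\mathrm{ZF}}$ and $\chi_k^{\mathrm{ZF}}$.

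It then remains to pin down the AQGN power and assemble the estimator. Applying the AQNM covariance $\mathbf{C}_{n_{q,\mathrm{DL},k}}=\xi_k\rho_k\,\mathrm{diag}(\cdot)$ to the pre-quantization pilot observation and correlating with the orthonormal $\bm{\Psi}^{\mathrm{H}}$ gives $\mathrm{var}(\tilde{n}_{e,\mathrm{DP},i})=\xi_k\rho_k(p_{\mathrm{DP}}\mathbb{E}[|\mu_{k,i}|^2]+\sigma_{\mathrm{DP}}^2)$. Collecting the pieces with $\rho_k=1-\xi_k$ and $\mathbb{E}[|\mu_{k,k}|^2]=\bar{\chi}_k^{\mathrm{pre}}+(E_k^{\mathrm{pre}})^2$, the observation variance becomes $\mathrm{var}(\tilde{\bm{r}}_{e,k}^{p,k})=\xi_k^2 p_{\mathrm{DP}}\bar{\chi}_k^{\mathrm{pre}}+\xi_k^2\sigma_{\mathrm{DP}}^2+\xi_k\rho_k(p_{\mathrm{DP}}\mathbb{E}[|\mu_{k,k}|^2]+\sigma_{\mathrm{DP}}^2)=\xi_k(p_{\mathrm{DP}}\widetilde{\chi}_k^{\mathrm{pre}}+\sigma_{\mathrm{DP}}^2)$ with $\widetilde{\chi}_k^{\mathrm{pre}}=\bar{\chi}_k^{\mathrm{pre}}+(1-\xi_k)(E_k^{\mathrm{pre}})^2$. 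The single surviving factor $\xi_k$ cancels against the $\xi_k$ in $\mathrm{cov}(\mu_{k,k},\tilde{\bm{r}}_{e,k}^{p,k})=\xi_k\sqrt{p_{\mathrm{DP}}}\bar{\chi}_k^{\mathrm{pre}}$ when forming the MMSE gain; substituting the mean $E_k^{\mathrm{pre}}$, this gain, and $\mathbb{E}[\tilde{\bm{r}}_{e,k}^{p,k}]=\xi_k\sqrt{p_{\mathrm{DP}}}E_k^{\mathrm{pre}}$ into Eq.~(\ref{equ:muhat}) reproduces the nonzero-mean branch, while the mismatched beam follows identically with $E_k^{\mathrm{pre}}=0$ and both $\bar{\chi}_k^{\mathrm{pre}},\widetilde{\chi}_k^{\mathrm{pre}}$ replaced by $\chi_k^{\mathrm{pre}}$, giving the zero-mean gain branch.

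I expect the main obstacle to be the quantization-noise bookkeeping in the last step: under AQNM the AQGN power is tied to the \emph{total} pre-quantization received power $\sum_j p_{\mathrm{DP},j}\mathbb{E}[|\mu_{k,j}|^2]+\sigma_{\mathrm{DP}}^2$ rather than to one beam, so justifying that the diagonal quantization model followed by orthonormal correlation isolates exactly $p_{\mathrm{DP}}\mathbb{E}[|\mu_{k,i}|^2]+\sigma_{\mathrm{DP}}^2$ on the $i$-th output --- thereby producing the clean factorization $\mathrm{var}(\tilde{\bm{r}}_{e,k}^{p,k})=\xi_k(p_{\mathrm{DP}}\widetilde{\chi}_k^{\mathrm{pre}}+\sigma_{\mathrm{DP}}^2)$ and the $(1-\xi_k)(E_k^{\mathrm{pre}})^2$ correction --- is the delicate part and may require either dropping cross-beam terms or a per-beam attribution assumption. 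The secondary difficulty is rigorously establishing the isotropic-projection identity $\mathbb{E}[|\widetilde{\mathbf{g}}^{\mathrm{H}}\hat{\mathbf{g}}|^2/\|\hat{\mathbf{g}}\|^2]=\mathbb{E}[\|\widetilde{\mathbf{g}}\|^2]/(N_{\mathrm{DL}}M)$ and importing the ZF shape reduction $t_{\mathrm{DL}}k_{\mathrm{DL},k}$ consistently from Lemma~\ref{Lemma 1}, both of which hinge on the independence of the estimation error from the beam direction.
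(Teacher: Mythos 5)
Your proposal is correct and follows essentially the same route as the paper's Appendix~B proof: the case-split mean via the Gamma/Nakagami half-moment $\mathbb{E}[\|\hat{\mathbf{g}}_{\mathrm{DL},k}\|]=\frac{\Gamma(k_{\mathrm{DL},k}+1/2)}{\Gamma(k_{\mathrm{DL},k})}\theta_{\mathrm{DL},k}^{1/2}$, second moments via the $\hat{\mathbf{g}}/\widetilde{\mathbf{g}}$ decomposition with the $1/(N_{\mathrm{DL}}M)$ (resp.\ $t_{\mathrm{DL}}$) projection-lemma shape scaling, and the per-beam AQGN variance $\xi_k(1-\xi_k)\mathbb{E}[(\sqrt{p_{\mathrm{DP}}}\mu_{k,i}+n_{\mathrm{DP},i})^2]$ --- which is precisely the per-beam attribution assumption you flag as delicate, and which the paper adopts without further justification. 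The only discrepancy is notational, not mathematical: your assignment of the nonzero-mean branch to $i=k$ is the correct one and agrees with the paper's own derivation, whereas the branch labels printed in the Proposition statement appear typographically swapped (with the $i=k$ line also missing the factor $\tilde{\bm{r}}_{e,k}^{p,i}$).
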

\textit{Proof:} Please refer to Appendix \ref{appendix 2}.
\begin{Rem}
With a reduced channel-hardening effect, the statistical CSI, without a priori knowledge of the perfect CSI, leads to severe performance degradation. Beamforming training scheme is adopted to estimate the equivalent channel $\mu_{k,i}$ instead of the channel matrix $\mathbf{g}_{k,i}$, effectively reducing the number of pilot sequences from $N_{\mathrm{DL}}M$ to $K_{\mathrm{DL}}$.
\end{Rem}

\subsubsection{UL RAUs Estimation}
For the estimation of interference channel $\mathbf{G}_\mathrm{I}$ to fulfill CLI cancellation, {all UL RAUs jointly receive the pilot from the DL RAUs and send it back to CPU via backhaul link as}
\begin{equation}
\begin{aligned}
\mathbf{Y}^p&=\mathbf{G}_\mathrm{I}\mathbf{W}\mathbf{P}\bm{\Psi}+\mathbf{N}_\mathrm{UP}\\
&=\left[ \sqrt{p_\mathrm{DP,1}}\mathbf{f}_1,\ldots,\sqrt{p_{\mathrm{DP},K_{\mathrm{DL}}}}\mathbf{f}_{K_{\mathrm{DL}}}\right]\bm{\Psi}+\mathbf{N}_\mathrm{UP},
\end{aligned}
\end{equation}
where DL RAUs send $\mathbf{W}\mathbf{P}\bm{\Psi}$, similar as above. $\mathbf{N}_\mathrm{UP}$ is AWGN. Correlating the received signal with $\bm{\Psi}^\mathrm{H}$ and extracting the $i$-th column, the quantized signal is 
\begin{equation}\label{y_i}
\tilde{\mathbf{y}}_i^p=\mathbf{A}\sqrt{p_{\mathrm{DP}}}\mathbf{f}_i+\mathbf{A}\mathbf{n}_{\mathrm{UP},i}+\tilde{\mathbf{n}}_{\mathrm{UP},i},
\end{equation}
{where $\tilde{\mathbf{n}}_{\mathrm{UP},i}$ is AWGN with covariance $\mathbf{C}_{\tilde{\mathbf{n}}_{\mathrm{UP},i}} = \mathbf{A}(\mathbf{I}-\mathbf{A})\mathrm{diag}(||\tilde{\mathbf{y}}_i^p||^2)$.} The MMSE estimation of $\mathbf{f}_i$ is
\begin{equation}\label{fihat}
\hat{\mathbf{f}_i}=\mathbb{E}\left[ \mathbf{f}_i\right] +\frac{\mathrm{cov}(\mathbf{f}_i,	\tilde{\mathbf{y}}_i^p)}{\mathrm{cov}(\tilde{\mathbf{y}}_i^p,	\tilde{\mathbf{y}}_i^p)}\left(\tilde{\mathbf{y}}_i^p-\mathbb{E}\left[ 	\tilde{\mathbf{y}}_i^p\right] \right).
\end{equation}

\begin{Prop} \label{Lemma3}
In the NAFD distributed M-MIMO systems, the estimations of the effective interference channel is
\begin{equation}\label{fhat}
\hat{\mathbf{f}_i}=\sqrt{p_{\mathrm{DP}}}\bm{\Lambda}\left( p_{\mathrm{DP}}\bm{\Lambda}+\sigma_{\mathrm{UP}}^2\mathbf{I}_{MN_{\mathrm{UL}}}\right) \tilde{\mathbf{y}}_i^p,
\end{equation}
where $\bm{\Lambda}=\mathrm{diag}\left( \sum_{n=1}^{N_{\mathrm{DL}}}\lambda_{\mathrm{I},1,n},\ldots,\sum_{n=1}^{N_{\mathrm{DL}}}\lambda_{{\mathrm{I},N_{\mathrm{UL}},n}}\right)\otimes\frac{\mathbf{I}_M}{N_{\mathrm{DL}}}$. \bl{Decomposing} it into the multiplication of equivalent large-scale fading $\delta_{\mathrm{I},n,i}$ and small-scale fading $\hat{\bm{q}}_{\mathrm{I},n,i}$, $\hat{\mathbf{f}_i}$ is rewritten as
\begin{equation}\label{delta}
\hat{\mathbf{f}_i}=[
\delta_{\mathrm{I},1,i}\hat{\bm{q}}_{\mathrm{I},1,i},
\delta_{\mathrm{I},2,i}\hat{\bm{q}}_{\mathrm{I},2,i},
\ldots,
\delta_{\mathrm{I},N_{\mathrm{UL}},i}\hat{\bm{q}}_{\mathrm{I},N_{\mathrm{UL}},i}
]^\mathrm{H},
\end{equation}	
where
\begin{equation}
\delta_{\mathrm{I},j,i}=\frac{\sqrt{p_{\mathrm{DP}}\alpha_j}\sum_{n=1}^{N_{\mathrm{DL}}}\lambda_{\mathrm{I},j,n}}{\sqrt{N_{\mathrm{DL}}p_{\mathrm{DP}}\sum_{n=1}^{N_{\mathrm{DL}}}\lambda_{\mathrm{I},j,n}+N_{\mathrm{DL}}^2\sigma_{\mathrm{UP}}^2}}.
\end{equation}
\end{Prop}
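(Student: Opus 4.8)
The plan is to evaluate the MMSE expression (\ref{fihat}) term by term, exploiting the zero-mean Rayleigh structure of the interference channel and the additive quantization noise model, and to show that the quantization matrix $\mathbf{A}$ cancels between the cross-covariance and the received-signal covariance. First I would observe that since every block $\mathbf{g}_{\mathrm{I},j,n}=\lambda_{\mathrm{I},j,n}^{1/2}\mathbf{h}_{\mathrm{I},j,n}$ of $\mathbf{G}_\mathrm{I}$ has zero-mean i.i.d. entries and is independent of the precoder $\mathbf{w}_i$, we have $\mathbb{E}[\mathbf{f}_i]=\mathbb{E}[\mathbf{G}_\mathrm{I}]\mathbf{w}_i=\mathbf{0}$, hence $\mathbb{E}[\tilde{\mathbf{y}}_i^p]=\mathbf{0}$ and (\ref{fihat}) collapses to $\hat{\mathbf{f}}_i=\mathrm{cov}(\mathbf{f}_i,\tilde{\mathbf{y}}_i^p)\,\mathbf{R}_{\tilde{\mathbf{y}}_i^p}^{-1}\tilde{\mathbf{y}}_i^p$, where $\mathbf{R}_{\tilde{\mathbf{y}}_i^p}=\mathrm{cov}(\tilde{\mathbf{y}}_i^p,\tilde{\mathbf{y}}_i^p)$.

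The central computation is the channel covariance $\mathbf{R}_{\mathbf{f}_i}=\mathbb{E}[\mathbf{f}_i\mathbf{f}_i^\mathrm{H}]$. Partitioning $\mathbf{f}_i$ into its $N_\mathrm{UL}$ per-RAU blocks $\mathbf{f}_{i,j}=\sum_{n=1}^{N_\mathrm{DL}}\mathbf{g}_{\mathrm{I},j,n}\mathbf{w}_{i,n}$ and using that $\mathbf{g}_{\mathrm{I},j,n}$ is zero-mean and independent of $\mathbf{g}_{\mathrm{I},j,n'}$ and of the precoder, all cross terms ($n\neq n'$) vanish, leaving $\mathbb{E}[\mathbf{f}_{i,j}\mathbf{f}_{i,j}^\mathrm{H}]=\sum_{n=1}^{N_\mathrm{DL}}\lambda_{\mathrm{I},j,n}\,\mathbb{E}[\mathbf{h}_{\mathrm{I},j,n}\mathbf{w}_{i,n}\mathbf{w}_{i,n}^\mathrm{H}\mathbf{h}_{\mathrm{I},j,n}^\mathrm{H}]$. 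Invoking the identity $\mathbb{E}[\mathbf{H}\mathbf{w}\mathbf{w}^\mathrm{H}\mathbf{H}^\mathrm{H}]=\mathbb{E}[\|\mathbf{w}\|^2]\mathbf{I}_M$ for an $M\times M$ matrix $\mathbf{H}$ with i.i.d. $\mathcal{CN}(0,1)$ entries independent of $\mathbf{w}$, together with the uniform per-RAU power split $\mathbb{E}[\|\mathbf{w}_{i,n}\|^2]=1/N_\mathrm{DL}$ of the unit-norm precoder, yields $\mathbb{E}[\mathbf{f}_{i,j}\mathbf{f}_{i,j}^\mathrm{H}]=\frac{1}{N_\mathrm{DL}}\big(\sum_{n}\lambda_{\mathrm{I},j,n}\big)\mathbf{I}_M$, i.e. $\mathbf{R}_{\mathbf{f}_i}=\bm{\Lambda}$. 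With the noise independent of the channel, the cross-covariance is then $\mathrm{cov}(\mathbf{f}_i,\tilde{\mathbf{y}}_i^p)=\sqrt{p_\mathrm{DP}}\,\bm{\Lambda}\mathbf{A}$.

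For the received-signal covariance I would assemble $\mathbf{R}_{\tilde{\mathbf{y}}_i^p}=p_\mathrm{DP}\mathbf{A}\bm{\Lambda}\mathbf{A}+\sigma_\mathrm{UP}^2\mathbf{A}^2+\mathbf{C}_{\tilde{\mathbf{n}}_{\mathrm{UP},i}}$ from (\ref{y_i}), and substitute the AQNM quantization covariance $\mathbf{C}_{\tilde{\mathbf{n}}_{\mathrm{UP},i}}=\mathbf{A}(\mathbf{I}-\mathbf{A})\,\mathrm{diag}\big(p_\mathrm{DP}\bm{\Lambda}+\sigma_\mathrm{UP}^2\mathbf{I}\big)$; since $\bm{\Lambda}$ is already diagonal and $\mathbf{A}$ is real diagonal, the pivotal algebraic collapse $\mathbf{A}^2+\mathbf{A}(\mathbf{I}-\mathbf{A})=\mathbf{A}$ gives the clean form $\mathbf{R}_{\tilde{\mathbf{y}}_i^p}=\mathbf{A}\big(p_\mathrm{DP}\bm{\Lambda}+\sigma_\mathrm{UP}^2\mathbf{I}\big)$. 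Because all factors are diagonal and commute, $\mathbf{A}$ cancels in $\mathrm{cov}(\mathbf{f}_i,\tilde{\mathbf{y}}_i^p)\mathbf{R}_{\tilde{\mathbf{y}}_i^p}^{-1}$, leaving $\hat{\mathbf{f}}_i=\sqrt{p_\mathrm{DP}}\,\bm{\Lambda}\big(p_\mathrm{DP}\bm{\Lambda}+\sigma_\mathrm{UP}^2\mathbf{I}_{MN_\mathrm{UL}}\big)^{-1}\tilde{\mathbf{y}}_i^p$, which is (\ref{fhat}). For the decomposition (\ref{delta}) I would read off the per-block variance of $\hat{\mathbf{f}}_i$: using $[\mathbf{R}_{\tilde{\mathbf{y}}_i^p}]_{jj}=\alpha_j(p_\mathrm{DP}\Lambda_j+\sigma_\mathrm{UP}^2)\mathbf{I}_M$ with $\Lambda_j=\frac{1}{N_\mathrm{DL}}\sum_n\lambda_{\mathrm{I},j,n}$, the $j$-th block gain squared times this variance gives $|\delta_{\mathrm{I},j,i}|^2=\frac{p_\mathrm{DP}\alpha_j\Lambda_j^2}{p_\mathrm{DP}\Lambda_j+\sigma_\mathrm{UP}^2}$; normalizing $\hat{\bm{q}}_{\mathrm{I},j,i}$ to unit-variance entries and substituting $\Lambda_j$ reproduces the stated $\delta_{\mathrm{I},j,i}$.

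I expect the main obstacles to be twofold. The computation of $\mathbf{R}_{\mathbf{f}_i}=\bm{\Lambda}$ hinges on the uniform per-RAU power assumption $\mathbb{E}[\|\mathbf{w}_{i,n}\|^2]=1/N_\mathrm{DL}$, which is exact only when the DL large-scale fadings are balanced and otherwise needs justifying as a tractable approximation; and the AQNM step requires care in interpreting the quantization covariance as $\mathbf{A}(\mathbf{I}-\mathbf{A})\mathrm{diag}(\cdot)$ evaluated at the pre-quantization covariance $p_\mathrm{DP}\bm{\Lambda}+\sigma_\mathrm{UP}^2\mathbf{I}$, since it is precisely this choice that produces the cancellation $\mathbf{A}^2+\mathbf{A}(\mathbf{I}-\mathbf{A})=\mathbf{A}$ and hence the $\mathbf{A}$-free estimator.
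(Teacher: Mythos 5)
Your proposal is correct and follows essentially the same route as the paper's own proof: zero-mean MMSE reduction, the block computation $\mathbb{E}[\mathbf{f}_i\mathbf{f}_i^\mathrm{H}]=\bm{\Lambda}$ via the uniform per-RAU power split of the unit-norm precoder, the AQNM covariance collapse $\mathbf{A}^2+\mathbf{A}(\mathbf{I}-\mathbf{A})=\mathbf{A}$ giving $\mathrm{cov}(\tilde{\mathbf{y}}_i^p,\tilde{\mathbf{y}}_i^p)=\mathbf{A}\left(p_\mathrm{DP}\bm{\Lambda}+\sigma_\mathrm{UP}^2\mathbf{I}\right)$, and the cancellation of $\mathbf{A}$, with your explicit per-block variance calculation of $\delta_{\mathrm{I},j,i}$ being just a more concrete version of the paper's whitening argument. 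Note that you correctly supply the matrix inverse $\left(p_\mathrm{DP}\bm{\Lambda}+\sigma_\mathrm{UP}^2\mathbf{I}\right)^{-1}$, which is missing (evidently a typo) in the statement of Eq.~(\ref{fhat}).
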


\textit{Proof:} Please refer to Appendix \ref{appendix 3}.


\section{SE/EE Analysis of NAFD Distributed M-MIMO}\label{sec:SE and EE}
\subsection{Derivation of the DL Achievable Rate}
With the estimated CSI and statistical CSI of DL users, we analyze the DL SE of the NAFD distributed M-MIMO systems with low-resolution ADCs. With the DL beamforming training estimation, we rewrite the received signal of the $k$-th user as
\begin{equation}\label{dl signal}
r^q_{\mathrm{DL},k} =\underbrace{\xi_k\sqrt{p_{\mathrm{DL}}}\hat{\mu}_{k,k}s_k}_\text{desird signal}+ \underbrace{n^{\text{eff}}_k+ \xi_kn_{\mathrm{DL}}+n^q_{\mathrm{DL},k}}_\text{effective noise}
\end{equation}
{where $n^{\text{eff}}_k=\xi_k\sum_{i\neq k}\sqrt{p_{\mathrm{DL}}}\hat{\mu}_{k,i}s_i+\xi_k\sum_{i=1}^{K_{\mathrm{DL}}}\sqrt{p_{\mathrm{DL}}}\tilde{\mu}_{k,i}s_i+\xi_k\sum_{j=1}^{K_{\mathrm{UL}}}\sqrt{p_{\mathrm{UL}}}u_{t,k,j}x_j$, and $\tilde{\mu}_{k,i}$ represents the estimation error of $\mu_{k,i}$. For simplicity, each term in {effective noise} can be approximated as equivalent noise obeying the Gaussian distribution. The DL achievable rate of the $k$-th user is derived as Eq.~(\ref{Rdl}), where $\widetilde{\mathbf{C}}_{n_{\mathrm{DL},k}}=\sigma_{\mathrm{DL}}^2+\mathbf{C}_{n_{q,\mathrm{DL},k}}/\xi_k^2$.}

\begin{figure*}[htpb]
\begin{small}
\begin{equation}\label{Rdl}
R_{\mathrm{DL},k}=\mathbb{E}\left[ \mathrm{log}_2\left( 1+\frac{\left| \sqrt{p_{\mathrm{DL}}}\hat{\mu}_{k,k}s_k\right|^2 }{\left| \sum_{i\neq k}\sqrt{p_{\mathrm{DL}}}\hat{\mu}_{k,i}s_i\right|^2+\left| \sum_{i=1}^{K_{\mathrm{DL}}}\sqrt{p_{\mathrm{DL}}}\tilde{\mu}_{k,i}s_i\right|^2+\left| \sum_{j=1}^{K_{\mathrm{UL}}}\sqrt{p_{\mathrm{UL}}}u_{\mathrm{I},k,j}x_j\right|^2+\widetilde{\mathbf{C}}_{n_{\mathrm{DL},k}} }\right) \right] .
\end{equation}
\begin{equation}\label{Rul}
R_{\mathrm{UL},k}=\mathbb{E}\left[ \mathrm{log}_2\left( 1+\frac{p_{\mathrm{UL}}\left| \mathbf{v}_k^\mathrm{H}\mathbf{A}\hat{\mathbf{g}}_{\mathrm{UL},k}\right|^2 }{\sum_{i\neq k}p_{\mathrm{UL}}\left| \mathbf{v}_k^\mathrm{H}\mathbf{A}\hat{\mathbf{g}}_{\mathrm{UL},i}\right|^2+\sum_{i=1}^{K_{\mathrm{UL}}}p_{\mathrm{UL}}\left| \mathbf{v}_k^\mathrm{H}\mathbf{A}\tilde{\mathbf{g}}_{\mathrm{UL},i}\right|^2+\sum_{j=1}^{K_{\mathrm{DL}}}p_{\mathrm{DL}}\left| \mathbf{v}_k^\mathrm{H}\mathbf{A}\tilde{\mathbf{f}}_{j}\right|^2 +\mathbf{\widetilde{C}}_{\mathbf{n}_{\mathrm{UL},k}}}\right) \right]. \tag{33}
\end{equation}
\end{small}
\hrulefill
\end{figure*}

\begin{thm}\label{thm1}
	Under the estimated CSI at the DL user, the closed-form expression of the DL achievable rate of $k$-th user is
	\begin{equation}\label{Rdlpre}
	R_{\mathrm{DL},k}^\mathrm{pre}=\mathrm{log}_2\left( 1+\frac{\xi_kp_{\mathrm{DL}}\left[(E_k^\mathrm{pre})^2+F_k^\mathrm{pre}\right]}{A_k^\mathrm{pre} + B_k^\mathrm{pre}+C_k^\mathrm{pre}+\sigma_{\mathrm{DL}}^2 }\right),
	\end{equation}
	where
	\begin{subequations}\label{dlparts}
	\begin{align}
	&F_k^\mathrm{pre}= \frac{p_{\mathrm{DP}}(\bar{\chi}_k^\mathrm{pre})^2}{p_{\mathrm{DP}}[\bar{\chi}_k^\mathrm{pre}+(1-\xi_k)(E_k^\mathrm{pre})^2]+\sigma_{\mathrm{DP}}^2}, \\
	&A_k^\mathrm{pre}=\sum_{i\neq k}\xi_kp_{\mathrm{DL}}\chi_k^\mathrm{pre}+\sum_{j=1}^{K_{\mathrm{UL}}}p_{\mathrm{UL}}\lambda_{\mathrm{I},k,j},\\
	&B_k^\mathrm{pre}=\frac{\xi_k(1-\xi_k)p_{\mathrm{DP}}\bar{\chi}_k^\mathrm{pre}(E_k^\mathrm{pre})^2+\xi_k\bar{\chi}_k^\mathrm{pre}\sigma_{\mathrm{\mathrm{DP}}}^2}{\bar{\chi}_k^\mathrm{pre}+(1-\xi_k)(E_k^\mathrm{pre})^2+\sigma_{\mathrm{DP}}^2/p_{\mathrm{DL}}},\\
	&C_k^\mathrm{pre}=(1-\xi_k)p_{\mathrm{DL}}\left[ \bar{\chi}_k^\mathrm{pre}+(E_k^\mathrm{pre})^2\right]. 
	\end{align}
	\end{subequations}
\end{thm}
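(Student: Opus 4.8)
The plan is to collapse the ergodic rate of Eq.~(\ref{Rdl}) into a single deterministic $\log_2(1+\mathrm{SINR})$ by exploiting the Gaussian-effective-noise approximation stated just after Eq.~(\ref{dl signal}), and then to reduce the theorem to a list of first- and second-moment evaluations carried out with the estimator statistics of Proposition~\ref{Lemma2} and the Gamma laws $\|\hat{\mathbf g}_{\mathrm{DL},k}\|^2\sim\Gamma(k_{\mathrm{DL},k},\theta_{\mathrm{DL},k})$, $\|\widetilde{\mathbf g}_{\mathrm{DL},k}\|^2\sim\Gamma(\tilde k_{\mathrm{DL},k},\tilde\theta_{\mathrm{DL},k})$. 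Because every term of the effective noise in Eq.~(\ref{dl signal}) — the residual inter-user interference, the channel-estimation error, the cross-link interference $u_{\mathrm{I},k,j}$, and the quantization noise — is modelled as an independent Gaussian contribution, the achievable rate equals $\log_2$ of one plus the ratio of the mean desired power to the total mean effective-noise power, and the work is to compute those two means.

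First I would treat the numerator. After the common $\xi_k$ rescaling that turns the additive quantization covariance into $\widetilde{\mathbf C}_{n_{\mathrm{DL},k}}=\sigma_{\mathrm{DL}}^2+\mathbf C_{n_{q,\mathrm{DL},k}}/\xi_k^2$, the desired power is proportional to $\mathbb E[|\hat\mu_{k,k}|^2]$. Splitting the MMSE estimate into its mean and fluctuation gives $\mathbb E[|\hat\mu_{k,k}|^2]=|\mathbb E[\hat\mu_{k,k}]|^2+\mathrm{Var}(\hat\mu_{k,k})$; substituting the $i=k$ branch of Proposition~\ref{Lemma2} identifies the first piece with $(E_k^{\mathrm{pre}})^2$, while the MMSE variance $\mathrm{cov}(\mu_{k,k},\tilde r)^2/\mathrm{cov}(\tilde r,\tilde r)$, evaluated with the recorded moments $\bar\chi_k^{\mathrm{pre}}$ and $\widetilde\chi_k^{\mathrm{pre}}$, yields exactly $F_k^{\mathrm{pre}}$ once the $\xi_k^2$ factors cancel between numerator and denominator.

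Next I would evaluate the denominator term by term. Recombining the leakage $\hat\mu_{k,i}$ and the error $\tilde\mu_{k,i}$ through $\hat\mu_{k,i}+\tilde\mu_{k,i}=\mu_{k,i}$ collapses the inter-user contribution for $i\ne k$ into $\xi_k\sum_{i\ne k}p_{\mathrm{DL}}\mathbb E[|\mu_{k,i}|^2]=\xi_k\sum_{i\ne k}p_{\mathrm{DL}}\chi_k^{\mathrm{pre}}$, the first half of $A_k^{\mathrm{pre}}$; the zero mean and variance $\lambda_{\mathrm{I},k,j}$ of $u_{\mathrm{I},k,j}$ supply the cross-link half $\sum_j p_{\mathrm{UL}}\lambda_{\mathrm{I},k,j}$. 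The self-error term at $i=k$ and the quantizer-induced bias, both carrying the factor $\rho_k=1-\xi_k$, are what fill $B_k^{\mathrm{pre}}$ and $C_k^{\mathrm{pre}}$: I would expand $\mathbf C_{n_{q,\mathrm{DL},k}}=\xi_k\rho_k\,\mathrm{diag}(r_{\mathrm{DL},k}r_{\mathrm{DL},k}^{\mathrm H})$ into the sum of its signal, interference and noise powers, divide by $\xi_k^2$, and sort the resulting $(1-\xi_k)$-weighted pieces into $C_k^{\mathrm{pre}}=(1-\xi_k)p_{\mathrm{DL}}[\bar\chi_k^{\mathrm{pre}}+(E_k^{\mathrm{pre}})^2]$ and the mixed term $B_k^{\mathrm{pre}}$, adding the thermal floor $\sigma_{\mathrm{DL}}^2$. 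Assembling all pieces reproduces Eq.~(\ref{Rdlpre}) with the parts of Eq.~(\ref{dlparts}).

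The hard part will not be any single estimate but the bookkeeping of the quantization noise. Since $\mathbf C_{n_{q,\mathrm{DL},k}}$ is proportional to the diagonal of the full received-signal autocorrelation, its expectation blends the desired, inter-user, estimation-error and cross-link powers, and the delicate step is to route exactly the right $(1-\xi_k)$-weighted combinations into $B_k^{\mathrm{pre}}$ and $C_k^{\mathrm{pre}}$ while letting the normalization $\widetilde\chi_k^{\mathrm{pre}}=\bar\chi_k^{\mathrm{pre}}+(1-\xi_k)(E_k^{\mathrm{pre}})^2$ be absorbed into the denominator of $F_k^{\mathrm{pre}}$; this is where factor-of-$\xi_k$ and sign slips are easiest. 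A secondary point requiring care is the legitimacy of replacing $\mathbb E[\log_2(1+\cdot)]$ by $\log_2(1+\mathbb E[\cdot]/\mathbb E[\cdot])$ in the weak-hardening regime that motivated beamforming training; I would justify it as the intended deterministic-equivalent (average-SINR) rate, which is consistent with $\hat\mu_{k,k}$ being available at the user through the training of Proposition~\ref{Lemma2}.
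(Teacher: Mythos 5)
Your proposal follows essentially the same route as the paper's proof in Appendix~D: the same replacement of $\mathbb{E}[\log_2(1+X/Y)]$ by $\log_2(1+\mathbb{E}[X]/\mathbb{E}[Y])$, the same identification of the desired power as $(E_k^{\mathrm{pre}})^2$ plus the MMSE-estimate variance giving $F_k^{\mathrm{pre}}$, and the same term-by-term moment evaluation of the denominator using Proposition~\ref{Lemma2}, with the $i=k$ estimation error producing $B_k^{\mathrm{pre}}$ and the expansion of $\mathbf{C}_{n_{q,\mathrm{DL},k}}=\xi_k(1-\xi_k)\,\mathrm{diag}(r_{\mathrm{DL},k}r_{\mathrm{DL},k}^{\mathrm{H}})$ producing $C_k^{\mathrm{pre}}$. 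The only (cosmetic) difference is that you recombine $\hat{\mu}_{k,i}+\tilde{\mu}_{k,i}=\mu_{k,i}$ for $i\neq k$ up front to get $\chi_k^{\mathrm{pre}}$ directly, whereas the paper computes the leakage term $\mathbb{E}[\phi_{\mathrm{DL},k}]$ and the error term $\mathbb{E}[\psi_{\mathrm{DL},k}]$ separately before they merge into $A_k^{\mathrm{pre}}$.
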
 

\textit{Proof:} Please refer to Appendix \ref{appendix 4}.

\begin{thm}\label{thm2}
	Under the statistical CSI, the closed-form expression of the DL achievable rate of $k$-th user is
	\begin{equation}\label{Rdlpre2}
	R_{\mathrm{DL},k,s}^\mathrm{pre}=\mathrm{log}_2\left( 1+\frac{\xi_kp_{\mathrm{DL}} (E_k^\mathrm{pre})^2}{A_{k,s}^\mathrm{pre} + B_{k,s}^\mathrm{pre}+\sigma_{\mathrm{DL}}^2 }\right) ,
	\end{equation}
	where
	\begin{subequations}\label{dlparts2}
	\begin{align}
	&A^\mathrm{pre}_{k,s}=\sum_{i\neq k}\xi_kp_{\mathrm{DL}}\chi_k^\mathrm{pre}+p_{\mathrm{DL}}\bar{\chi}_k^\mathrm{pre},\\
	&B^\mathrm{pre}_{k,s}=\sum_{j=1}^{K_{\mathrm{UL}}}p_{\mathrm{UL}}\lambda_{\mathrm{I},k,j}+(1-\xi_k)p_{\mathrm{DL}} (E_k^\mathrm{pre})^2.
	\end{align}
	\end{subequations}
\end{thm}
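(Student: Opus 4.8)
The plan is to reuse the worst-case uncorrelated-noise (use-and-then-forget) argument that underlies Theorem~\ref{thm1} in Appendix~\ref{appendix 4}, but with the desired channel gain replaced by its statistical mean, since under statistical CSI the $k$-th DL user knows only $\mathbb{E}[\mu_{k,k}]=E_k^\mathrm{pre}$ rather than the instantaneous beamforming-training estimate $\hat{\mu}_{k,k}$. Starting from the quantized signal $r_{q,\mathrm{DL},k}=\xi_k r_{\mathrm{DL},k}+n_{q,\mathrm{DL},k}$, I would split the desired-user channel as $\mu_{k,k}=E_k^\mathrm{pre}+(\mu_{k,k}-E_k^\mathrm{pre})$ and designate $\xi_k\sqrt{p_{\mathrm{DL}}}E_k^\mathrm{pre}s_k$ as the coherent desired signal. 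The zero-mean beamforming-gain fluctuation $(\mu_{k,k}-E_k^\mathrm{pre})s_k$, the inter-user terms $\mu_{k,i}s_i$ with $i\neq k$, the cross-link interference $u_{\mathrm{I},k,j}x_j$, the AWGN, and the quantization noise $n_{q,\mathrm{DL},k}$ are all uncorrelated with the desired signal, so they are lumped into an effective noise whose total power bounds the achievable rate from below through Jensen's inequality. The qualitative contrast with Theorem~\ref{thm1} is that the estimate variance $F_k^\mathrm{pre}$ no longer augments the numerator; the entire fluctuation of $\mu_{k,k}$ about its mean migrates into the denominator as beamforming-gain uncertainty.

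The second step is to evaluate each power term using the moments furnished by Proposition~\ref{Lemma2}, together with the Gamma laws for $\|\hat{\mathbf{g}}_{\mathrm{DL},k}\|^2$ and $\|\widetilde{\mathbf{g}}_{\mathrm{DL},k}\|^2$ recorded in Appendix~\ref{appendix 1}. Concretely, the desired power is $\xi_k^2 p_{\mathrm{DL}}(E_k^\mathrm{pre})^2$; the beamforming-gain uncertainty contributes $\xi_k^2 p_{\mathrm{DL}}\,\mathrm{Var}(\mu_{k,k})=\xi_k^2 p_{\mathrm{DL}}\bar{\chi}_k^\mathrm{pre}$, since $\mathbb{E}[|\mu_{k,k}|^2]=(E_k^\mathrm{pre})^2+\bar{\chi}_k^\mathrm{pre}$; the inter-user interference contributes $\xi_k^2\sum_{i\neq k}p_{\mathrm{DL}}\chi_k^\mathrm{pre}$ because $\mathbb{E}[|\mu_{k,i}|^2]=\chi_k^\mathrm{pre}$ for $i\neq k$; the cross-link term contributes $\xi_k^2\sum_{j}p_{\mathrm{UL}}\lambda_{\mathrm{I},k,j}$; and the AWGN contributes $\xi_k^2\sigma_{\mathrm{DL}}^2$. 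Assembling the ratio and normalizing by $\xi_k$ (consistent with the reference level $\widetilde{\mathbf{C}}_{n_{\mathrm{DL},k}}=\sigma_{\mathrm{DL}}^2+\mathbf{C}_{n_{q,\mathrm{DL},k}}/\xi_k^2$) already produces the stated numerator $\xi_k p_{\mathrm{DL}}(E_k^\mathrm{pre})^2$.

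The delicate step, and the one I expect to be the main obstacle, is the quantization-noise term, whose covariance $\mathbf{C}_{n_{q,\mathrm{DL},k}}=\xi_k\rho_k\,\mathrm{diag}(r_{\mathrm{DL},k}r_{\mathrm{DL},k}^\mathrm{H})$ scales with the aggregate received power $\mathbb{E}[|r_{\mathrm{DL},k}|^2]$, so it injects a $\rho_k=(1-\xi_k)$-weighted copy of the constituent powers back into the effective-noise budget. The bookkeeping that follows is where the $\xi_k$-coefficients are decided: after dividing by $\xi_k$, the $(1-\xi_k)$-weighted quantization copies of the self-mean, self-uncertainty, cross-link, and AWGN terms combine with the directly computed $\xi_k$-weighted powers via $\xi_k+(1-\xi_k)=1$, collapsing them into the coefficient-free entries $p_{\mathrm{DL}}\bar{\chi}_k^\mathrm{pre}$, $\sum_j p_{\mathrm{UL}}\lambda_{\mathrm{I},k,j}$, and $\sigma_{\mathrm{DL}}^2$, while the self-mean quantization leaves the distinctive residue $(1-\xi_k)p_{\mathrm{DL}}(E_k^\mathrm{pre})^2$ in $B_{k,s}^\mathrm{pre}$. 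The subtle point is that the inter-user interference must retain its bare factor $\xi_k$ in $A_{k,s}^\mathrm{pre}$ (as it does in $A_k^\mathrm{pre}$ of Theorem~\ref{thm1}); matching this requires treating the inter-user data streams as coherently quantized rather than as additive quantization noise, and verifying that combination carefully against Proposition~\ref{Lemma2} is the part I would check most rigorously to land exactly on $A_{k,s}^\mathrm{pre}$ and $B_{k,s}^\mathrm{pre}$.
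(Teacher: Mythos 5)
Your proposal is correct and lands on the right formula, but it takes a more self-contained route than the paper, whose own proof of Theorem~\ref{thm2} is a two-line corollary of Theorem~\ref{thm1}: the paper computes the gap Eq.~(\ref{equ:gai}) between the coherent power $\mathbb{E}\bigl[|\xi_k\sqrt{p_{\mathrm{DL}}}\hat{\mu}_{k,k}s_k|^2\bigr]$ used in Theorem~\ref{thm1} and the statistical-CSI coherent power $\bigl(\xi_k\sqrt{p_{\mathrm{DL}}}E_k^\mathrm{pre}\bigr)^2$ (this gap is exactly $\xi_k^2 p_{\mathrm{DL}}F_k^\mathrm{pre}$), moves it from numerator to denominator, and simplifies ``with proper approximation.'' The simplification works because $F_k^\mathrm{pre}$ plus the $i=k$ estimation-error term underlying $B_k^\mathrm{pre}$ telescopes to $\mathbb{E}[|\mu_{k,k}|^2]-(E_k^\mathrm{pre})^2=\bar{\chi}_k^\mathrm{pre}$, which then merges with the $\bar{\chi}$ part of $C_k^\mathrm{pre}$ via $\xi_k+(1-\xi_k)=1$ to give the coefficient-free $p_{\mathrm{DL}}\bar{\chi}_k^\mathrm{pre}$ in $A_{k,s}^\mathrm{pre}$, while the $(E_k^\mathrm{pre})^2$ part of $C_k^\mathrm{pre}$ survives as the residue $(1-\xi_k)p_{\mathrm{DL}}(E_k^\mathrm{pre})^2$ in $B_{k,s}^\mathrm{pre}$ --- precisely the bookkeeping you reconstruct from scratch. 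What your route buys: the use-and-then-forget framing is actually \emph{more} rigorous here than the paper's, since under statistical CSI the coherent coefficient $E_k^\mathrm{pre}$ is deterministic, so your decomposition yields a genuine achievable lower bound, whereas the paper inherits the heuristic $\mathbb{E}[\log_2(1+X/Y)]\approx\log_2(1+\mathbb{E}[X]/\mathbb{E}[Y])$ approximation of Appendix~\ref{appendix 4}; what the paper's route buys is brevity and automatic consistency of all moment evaluations with Proposition~\ref{Lemma2}. Finally, the difficulty you flag on the inter-user coefficient is real and your instinct is sound: exact AQNM bookkeeping (the quantization noise copies \emph{all} constituents of $\mathbb{E}[|r_{\mathrm{DL},k}|^2]$, inter-user streams included) would render the inter-user term coefficient-free as well, and the bare $\xi_k$ retained in $A_{k,s}^\mathrm{pre}$ is inherited from $A_k^\mathrm{pre}$ of Theorem~\ref{thm1}, where the $(1-\xi_k)$-weighted inter-user quantization copy is silently dropped; no ``coherent quantization'' device is needed --- this is simply the approximation the paper's phrase ``with proper approximation'' covers, so your derivation matches the stated result once you adopt the same convention rather than trying to justify the $\xi_k$ exactly.
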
 
\begin{proof}
	The only difference between Theorems~\ref{thm2} and \ref{thm1} exists in the gap between the effective signal:
	\begin{equation}\label{equ:gai}
	\begin{aligned}
	&\mathbb{E}\left[ \left| \xi_k\sqrt{p_{\mathrm{DL}}}\hat{\mu}_{k,k}s_k\right|^2\right]-\left( \mathbb{E} \left[ \xi_k\sqrt{p_{\mathrm{DL}}}{\mu}_{k,k}s_k\right]\right) ^2\\
	&=\xi_k^2p_{\mathrm{DL}}\frac{p_{\mathrm{DP}}(\bar{\chi}_k^\mathrm{pre})^2}{p_{\mathrm{DP}}\bar{\chi}_k^\mathrm{pre}+(1-\xi_k)p_{\mathrm{DP}}(E_k^\mathrm{pre})^2+\sigma_{\mathrm{DP}}^2}.
	\end{aligned}
	\end{equation}
	After inserting Eq.~(\ref{equ:gai}) into denominator, we have Eqs.~(\ref{Rdlpre2}) and (\ref{dlparts2}) with proper approximation.
\end{proof}
\begin{Rem}
{The closed-form expressions of the DL achievable rate obtained by the DL channel estimation and static statistics are given, respectively. The beamforming training mechanism's effect can be analyzed by comparing the two cases.}
\end{Rem}
\subsection{Derivation of the UL Achievable Rate}
Considering interference cancellation at the UL RAUs, we analyze the UL SE of NAFD distributed M-MIMO systems with low-resolution ADCs. The quantized UL signal is 
\begin{equation}
\mathbf{r}_{q}^{\mathrm{UL}}\!=\!\mathbf{A}\left(\sum_{i=1}^{K_{\mathrm{UL}}}\sqrt{p_{\mathrm{UL}}}\mathbf{g}_{\mathrm{UL},i}x_i\!+\!\sum_{j=1}^{K_{\mathrm{DL}}}\sqrt{p_{\mathrm{DL}}}{\mathbf{f}}_js_j\right)\!+\!\widetilde{\mathbf{n}}_{q,\mathrm{UL}},
\end{equation}
where $\widetilde{\mathbf{n}}_{q,\mathrm{UL}}=\mathbf{A}\mathbf{n}_{\mathrm{UL}}+\mathbf{n}_{q,\mathrm{UL}}$ is the equivalent quantization noise. The UL RAUs perform CLI cancellation as
\begin{equation}
\begin{aligned}
&\mathbf{r}_{q,\mathrm{UL}}^\mathrm{IC}
=\mathbf{r}_{q}^{\mathrm{UL}}-\mathbf{x}^\mathrm{IC},\\
\end{aligned}
\end{equation}
where $\mathbf{x}^\mathrm{IC}=\sum_{j=1}^{K_{\mathrm{DL}}}\sqrt{p_{\mathrm{DL} ,j}}\mathbf{A}\hat{\mathbf{f}}_js_j \in \mathbb{C}^{N_{\mathrm{UL}}M\times 1}$ is eliminable interference and $\tilde{\mathbf{f}}_j={\mathbf{f}}_j-\hat{\mathbf{f}}_j$ is the estimation error of ${\mathbf{f}}_j$. After interference cancellation, the signal is processed by the receiver as
\begin{equation}\label{equ:dcp}
\begin{aligned}
&{r}_{k}^{\mathrm{UL}}=\mathbf{v}_k^\mathrm{H}\mathbf{r}_{q,\mathrm{UL}}^\mathrm{IC}=\mathbf{v}_k^\mathrm{H}{\mathbf{n}}_{q,\mathrm{UL}}+{\mathbf{n}}_{\mathrm{UL}}\\
&+\mathbf{v}_k^\mathrm{H}\mathbf{A}\left(\sum_{i=1}^{K_{\mathrm{UL}}}\sqrt{p_{\mathrm{UL}}}\mathbf{g}_{\mathrm{UL},i}x_i+\sum_{j=1}^{K_{\mathrm{DL}}}\sqrt{p_{\mathrm{DL}}}\tilde{\mathbf{f}}_js_j\right), 
\end{aligned}
\end{equation}
where the receiving vector $\mathbf{v}_k^\mathrm{H}$ for the $k$-th user for MRC and ZFR are ${\hat{\mathbf{g}}_{\mathrm{UL},k}}$ and $\hat{\mathbf{g}}_{\mathrm{UL},k}\left( \hat{\mathbf{g}}_{\mathrm{UL},k}^\mathrm{H}\hat{\mathbf{g}}_{\mathrm{UL},k}\right)^{-1}$, respectively. After decomposing ${r}_{k}^{\mathrm{UL}}$ into effective signal, interference and equivalent noise in Eq.~(\ref{equ:dcp}), the UL achievable rate of the $k$-th user can be expressed as Eq.~(\ref{Rul}), where $\mathbf{\widetilde{C}}_{\mathbf{n}_{\mathrm{UL},k}}=\left\|\mathbf{v}_k^\mathrm{H}\mathbf{A} \right\|^2\sigma_{\mathrm{UL}}^2+\mathbf{v}_k^\mathrm{H}\mathbf{C}_{\mathbf{n}_{q,\mathrm{UL}}}\mathbf{v}_k$. 
\begin{thm}\label{thm3}
	With the interference cancellation of UL RAUs, the closed-form expression of the UL achievable rate of the NAFD distributed M-MIMO systems can be derived as
	\begin{equation}\label{RulMRC}
	\setcounter{equation}{34}
	R_{\mathrm{UL},k}^\mathrm{pre}=\mathrm{log}_2\left( 1+\frac{A_{k}^\mathrm{pre}}{B_{k}^\mathrm{pre}+\sum_{n=1}^{N_\mathrm{UL}}\alpha_n C_{n,k}^\mathrm{pre}/N_{\mathrm{UL}}}\right), 
	\end{equation}

	\begin{itemize}		
		\item With MRT pre-coding and MRC receiver,
				\begin{small}
		\begin{subequations}
		\begin{align}
		&A_{k}^\mathrm{MR}=p_{\mathrm{UL}}M\sum_{n=1}^{N_{\mathrm{UL}}}\alpha_n^2\beta_{\mathrm{UL},n,k}+{\sigma_{\mathrm{UL}}^2}/{N_{\mathrm{UL}}}\sum_{n=1}^{N_{\mathrm{UL}}}\alpha_n,\\
		&B_{k}^\mathrm{MR}=p_{\mathrm{UL}}M\sum_{n=1}^{N_{\mathrm{UL}}}\alpha_n\left( 1-\alpha_n\right) \beta_{\mathrm{UL},n,k},\\
		&C_{n,k}^\mathrm{MR}=p_{\mathrm{DL}}\overbrace{\left(\sum_{m=1}^{N_{\mathrm{DL}}}\lambda_{\mathrm{I},n,m}-\delta_{\mathrm{UL},n,i}^2\right)/{N_{\mathrm{DL}}}}^{\rho_{\mathrm{UL},n,i}}\\\nonumber
		&+\sum_{i\neq k}p_{\mathrm{UL}}\beta_{\mathrm{UL},n,i}+\sum_{i=1}^{K_{\mathrm{UL}}}p_{\mathrm{UL}}\eta_{\mathrm{UL},n,i},
		\end{align}
		\end{subequations}
				\end{small}
		\item With ZFT pre-coding and ZFR receiver, 
		\begin{small}
		\begin{subequations}
		\begin{align}
		&A_{k}^\mathrm{ZF}=t_{\mathrm{UL}}p_{\mathrm{UL}}M\sum_{n=1}^{N_{\mathrm{UL}}}\alpha_n^2\beta_{\mathrm{UL},n,k},\\
		&B_{k}^\mathrm{ZF}=t_{\mathrm{UL}}p_{\mathrm{UL}}M\sum_{n=1}^{N_{\mathrm{UL}}}\alpha_n\left( 1-\alpha_n\right) \beta_{\mathrm{UL},n,k},\\
		&C_{n,k}^\mathrm{ZF}=\sum_{j=1}^{K_{\mathrm{DL}}}p_{\mathrm{DL}}\rho_{\mathrm{UL},n,j}^2+\sum_{i=1}^{K_{\mathrm{UL}}}p_{\mathrm{UL}}\eta_{\mathrm{UL},n,i}+\sigma_{\mathrm{UL}}^2.
		\end{align}
		\end{subequations}
		\end{small}
	\end{itemize}
\end{thm}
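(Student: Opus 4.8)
\textit{Proof proposal:} The plan is to evaluate the ergodic rate in Eq.~(\ref{Rul}) through the standard deterministic-equivalent approximation $\mathbb{E}[\log_2(1+X/Y)]\approx \log_2(1+\mathbb{E}[X]/\mathbb{E}[Y])$, which is tight in the distributed M-MIMO regime owing to channel hardening over the $N_{\mathrm{UL}}M$ receive dimensions. Under this approximation the expected desired-signal power $p_{\mathrm{UL}}\mathbb{E}[|\mathbf{v}_k^\mathrm{H}\mathbf{A}\hat{\mathbf{g}}_{\mathrm{UL},k}|^2]$ is assembled into the numerator $A_k^\mathrm{pre}$, while the four denominator terms of Eq.~(\ref{Rul}) — the multiuser interference from the estimated channels $\hat{\mathbf{g}}_{\mathrm{UL},i}$, the estimation-error interference carried by $\tilde{\mathbf{g}}_{\mathrm{UL},i}$, the residual CLI $\tilde{\mathbf{f}}_j$ left after the cancellation step, and the combined AWGN-plus-quantization covariance $\widetilde{\mathbf{C}}_{\mathbf{n}_{\mathrm{UL},k}}$ — are computed separately and sorted into the self-distortion term $B_k^\mathrm{pre}$ and the per-RAU sum $\tfrac{1}{N_{\mathrm{UL}}}\sum_n\alpha_n C_{n,k}^\mathrm{pre}$.

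The first reduction I would make is to exploit the Kronecker structure $\mathbf{A}=\mathrm{diag}(\alpha_1,\dots,\alpha_{N_{\mathrm{UL}}})\otimes\mathbf{I}_M$, so that every quadratic form decouples across RAUs as $\mathbf{v}_k^\mathrm{H}\mathbf{A}\mathbf{g}=\sum_{n=1}^{N_{\mathrm{UL}}}\alpha_n\mathbf{v}_{k,n}^\mathrm{H}\mathbf{g}_n$, with $\mathbf{v}_{k,n}$ and $\mathbf{g}_n$ the $M\times 1$ sub-blocks of the $n$-th RAU. I would then invoke the orthogonality property of the MMSE estimator from Sec.~\ref{sec:ce}: $\hat{\mathbf{g}}_{\mathrm{UL},i}$ and $\tilde{\mathbf{g}}_{\mathrm{UL},i}$ are independent, zero-mean, complex Gaussian with per-RAU variances $\beta_{\mathrm{UL},n,i}$ and $\eta_{\mathrm{UL},n,i}$, and the estimates are mutually independent across users (orthogonal pilots). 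This makes all cross terms vanish in expectation and reduces each interference contribution to products of the per-RAU second moments $\mathbb{E}[\|\hat{\mathbf{g}}_{\mathrm{UL},n,k}\|^2]$, while the desired term requires the corresponding fourth-order (Gamma) moment, whose distribution parameters are furnished by Lemma~\ref{Lemma 1}. For the post-cancellation CLI I would use Proposition~\ref{Lemma3}: the per-RAU error variance of $\hat{\mathbf{f}}_j$ equals $(\sum_{m}\lambda_{\mathrm{I},n,m}-\delta_{\mathrm{UL},n,i}^2)/N_{\mathrm{DL}}$, which is precisely the coefficient $\rho_{\mathrm{UL},n,i}$ entering $C_{n,k}^\mathrm{pre}$.

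The two pre-coder/receiver cases are then handled in parallel. For MRC ($\mathbf{v}_k=\hat{\mathbf{g}}_{\mathrm{UL},k}$) all quadratic forms are direct Gamma moments, and the self-quantization contribution follows from $\mathbf{C}_{\mathbf{n}_{q,\mathrm{UL}}}=\mathbf{A}(\mathbf{I}-\mathbf{A})\mathrm{diag}(\mathbf{r}_{\mathrm{UL}}\mathbf{r}_{\mathrm{UL}}^\mathrm{H})$, producing the $\alpha_n(1-\alpha_n)$ weighting that defines $B_k^\mathrm{MR}$. For ZFR, where the combiner involves the inverse Gram matrix of the estimated channels, I would additionally use the first inverse-Wishart moment, which injects the dimensionality factor $t_{\mathrm{UL}}=(N_{\mathrm{UL}}M-K_{\mathrm{UL}}+1)/(N_{\mathrm{UL}}M)$ into $A_k^\mathrm{ZF}$ and $B_k^\mathrm{ZF}$.

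I expect the main obstacle to be the quantization-noise covariance contribution $\mathbf{v}_k^\mathrm{H}\mathbf{A}(\mathbf{I}-\mathbf{A})\mathrm{diag}(\mathbf{r}_{\mathrm{UL}}\mathbf{r}_{\mathrm{UL}}^\mathrm{H})\mathbf{v}_k$. Because the diagonal of $\mathbf{r}_{\mathrm{UL}}\mathbf{r}_{\mathrm{UL}}^\mathrm{H}$ mixes the desired signal, the multiuser and CLI signals, and the noise, its expectation must be taken jointly with $\mathbf{v}_k$, and since $\mathbf{A}$ is not a scalar multiple of the identity the factor $\mathbf{A}(\mathbf{I}-\mathbf{A})$ cannot be pulled outside the quadratic form; the per-RAU decomposition above, combined with the second-moment expressions already derived, is what makes this term tractable and separates its desired-user part (into $B_k^\mathrm{pre}$) from its interference-and-noise part (into $C_{n,k}^\mathrm{pre}$). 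For ZFR the analogous difficulty is that $\mathbf{v}_k^\mathrm{H}\mathbf{A}\hat{\mathbf{g}}_{\mathrm{UL},k}\neq 1$ under non-uniform ADC resolutions, so the zero-forcing no longer perfectly suppresses the desired-channel fluctuation and the correlation between $\mathbf{v}_k$ and $\mathbf{A}$ must be tracked carefully through the Wishart moment computations.
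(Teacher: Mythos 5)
Your skeleton matches the paper's proof in Appendix~\ref{appendix 5} almost step for step: the same deterministic-equivalent approximation $\mathbb{E}[\log_2(1+X/Y)]\approx\log_2(1+\mathbb{E}[X]/\mathbb{E}[Y])$ applied to Eq.~(\ref{Rul}) after normalizing by $\left\|\mathbf{v}_k\right\|^2$, the per-RAU decoupling through $\mathbf{A}=\mathrm{diag}(\alpha_1,\dots,\alpha_{N_{\mathrm{UL}}})\otimes\mathbf{I}_M$, MMSE orthogonality with per-RAU variances $\beta_{\mathrm{UL},n,k}$ and $\eta_{\mathrm{UL},n,k}$, Gamma moments of $\left\|\mathbf{A}\hat{\mathbf{g}}_{\mathrm{UL},k}\right\|^2$ via Lemma~\ref{Lemma 1}, the residual-CLI variance $\rho_{\mathrm{UL},n,i}$ from Proposition~\ref{Lemma3}, and the $\alpha_n(1-\alpha_n)$ weighting from $\mathbf{C}_{\mathbf{n}_{q,\mathrm{UL}}}$ that yields $B_k^\mathrm{pre}$. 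The one genuine divergence is the ZF case: you propose obtaining $t_{\mathrm{UL}}=(N_{\mathrm{UL}}M-K_{\mathrm{UL}}+1)/(N_{\mathrm{UL}}M)$ from the first inverse-Wishart moment, whereas the paper invokes the subspace-projection approximation (the second lemma of Appendix~\ref{appendix 1}): the ZFR vector is treated as living in an $(N_{\mathrm{UL}}M-K_{\mathrm{UL}}+1)$-dimensional subspace and the projected power is re-approximated as Gamma with the scaled shape parameter. This distinction matters here because the aggregated channel has independent but \emph{non-identically} distributed per-RAU blocks (different $\beta_{\mathrm{UL},n,k}$, and additionally unequal $\alpha_n$), so the estimated Gram matrix is not Wishart and the exact inverse-Wishart moment you rely on is not available in closed form; your route recovers the correct factor only in the i.i.d. special case, while the paper's projection lemma is precisely the device that extends it (approximately) to the heterogeneous setting. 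Conversely, you are more careful than the paper on one point it glosses over: since the combiner zero-forces $\hat{\mathbf{g}}_{\mathrm{UL},i}$ rather than $\mathbf{A}\hat{\mathbf{g}}_{\mathrm{UL},i}$, the multiuser term $\mathbb{E}\bigl[\left|\mathbf{v}_k^\mathrm{H}\mathbf{A}\hat{\mathbf{g}}_{\mathrm{UL},i}\right|^2\bigr]$ vanishes exactly only under uniform ADC resolutions, whereas the paper simply sets $\mathbb{E}\left[\phi_{\mathrm{UL},k}^\mathrm{ZF'}\right]=0$; tracking that mismatch, as you suggest, would refine the closed form rather than change it.
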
 

\textit{Proof:} Please refer to Appendix \ref{appendix 5}.
\begin{Rem}
With the CLI cancellation under low-resolution ADCs, we obtain the expression for closed-form expressions of the UL SE of the NAFD distributed M-MIMO. The total energy consumption will be analyzed to further analyze the impact of low-resolution ADCs comprehensively.
\end{Rem}

\subsection{Power Consumption of NAFD Distributed M-MIMO}
The high-resolution ADCs with high SE pay higher hardware cost and energy consumption. Environment-friendly communication raises an urgent call for the energy consumption with low-resolution ADCs. {Inspired by \cite{lv2018spectral,zuo2016energy}, the EE of the NAFD distributed M-MIMO systems is modeled as
\begin{equation}
\eta_\mathrm{EE}=\frac{W\cdot\left( \sum_{k=1}^{K_\mathrm{UL}}R_{\mathrm{UL},k}+\sum_{k=1}^{K_\mathrm{DL}}R_{\mathrm{DL},k}\right)}{P_\mathrm{TC}+P_\mathrm{T}+P_\mathrm{LP}+P_\mathrm{BH}},
\end{equation}
where $W$ is the transmission bandwidth. $P_\mathrm{TC}$ is the power consumption of the transceiver link\footnote{{In this paper, the effect of the low-resolution ADC is mainly considered to evaluate
the contribution for EE. Other minor factors are modelled uniformly in $P_\text{UE}$ (such as the
power amplifier in the transmitter chain and the low-noise-amplifier in the receiver chain) as \cite{zuo2016energy,anokye2020low,li2015spectral}.}} as

\begin{equation}
P_\mathrm{TC}=NMP_\mathrm{RAU}+\rho_\mathrm{SYN} P_\mathrm{SYN}+KP_\mathrm{UE}+MP_\mathrm{ADC}.
\end{equation}
where $P_\mathrm{RAU}, P_\mathrm{SYN}$ and $P_\mathrm{UE}$ represent the power consumption of the RAU, local oscillator, and the user, respectively. In the distributed antenna systems, each RAU has a crystal oscillator, $\rho_\mathrm{SYN}=N$. In the co-located antenna systems, base station only deploys one crystal oscillator, $\rho_\mathrm{SYN}=1$. $P_\mathrm{ADC}$ is the power consumption of the low-resolution ADCs \cite{lv2018spectral}, as $P_\mathrm{ADC}=\sum_i a_0\cdot M\cdot 2^{b_i}+a_1$, where $a_0$ and $a_1$ are constant parameters related to energy consumption of ADCs \cite{lv2018spectral} and $b_i$ is the bitwidth of $i$-th ADC.} $P_\mathrm{T}$ is the energy consumption of the transmitting signals as

\begin{equation}
P_\mathrm{T}=\frac{K_\mathrm{UL}(T-\tau_1-\tau_2)}{T\cdot \xi}p_{\mathrm{UL}}+\frac{K_\mathrm{DL}(T-\tau_1-\tau_2)}{T\cdot \xi}p_{\mathrm{DL}},
\end{equation}
where $\xi$ is the amplifier efficiency. Denote $P_\mathrm{LP}$ as the energy consumption of the linear receiver and pre-coder \cite{lv2018spectral}: 

\begin{equation}
P_\mathrm{LP}\!=\!\frac{T\!-\!\tau_{\omega}}{T}\frac{2WMNK}{L_\mathrm{RAU}}\!+\!\frac{\tau_{\omega}}{T}
\left\{
\begin{aligned}
&\frac{WMNK(3K\!+\!1)}{L_\mathrm{RAU}},(a)\\
&\frac{3WMNK}{L_\mathrm{RAU}},(b)
\end{aligned}
\right.
\end{equation}
where $(a)$ and $(b)$ mean RAUs adopt MRT pre-coder and MRC receiver, and ZFT pre-coder and ZFR receiver, respectively. $L_\mathrm{RAU}$ is the calculation efficiency of the complex operation under unit power consumption. {$P_\mathrm{BH}$, which can be ignored in the co-located antenna systems, is the energy consumption of the backhaul link \cite{lv2018spectral} between RAUs and CPU as 
\begin{equation}
P_\mathrm{BH}=N\left[P_\mathrm{0}+WP_\mathrm{BT}\left( \sum_{k=1}^{K_\mathrm{UL}}R_{\mathrm{UL},k}+\sum_{k=1}^{K_\mathrm{DL}}R_{\mathrm{DL},k}\right)\right],
\end{equation}
where $P_\mathrm{0}$ and $P_\mathrm{BT}$ are the fixed energy consumption on each backhaul link and dynamic energy consumption associated with the backhaul rate, respectively.}

\begin{Rem}
We discuss the various components of energy consumption in NAFD distributed M-MIMO systems and EE with low-resolution ADCs. The energy consumption of the transceiver can be significantly reduced by a flexible selection of quantization bits, which improves the system EE. However, the reduction of quantization bits affects the SE of the system at the same time, and in the next section, we propose the corresponding allocation scheme for feasible ADC quantization.
\end{Rem}


\section{Bit Allocation Algorithm for ADC Quantization}\label{sec:algorithm}
In this section, we discuss the delicate allocation mechanism of ADC quantization for RAUs and users in NAFD distributed M-MIMO systems, which considers the tradeoff between SE and EE to significantly reduce the system energy consumption while ensuring the system SE.

\subsection{Problem Description of ADC Bit Allocation}
Adaptive numerical schemes with a bit allocation mechanism show the non-negligible advantage of greater flexibility, and greener power consumption than the equal-resolution schemes \cite{li2018spectral}. The objective function is designed to reach the maximum value of total SE/EE of NAFD distributed M-MIMO with low-resolution ADCs as
\begin{subequations}\label{equ:total}
\begin{align}
&f_1(\textbf{b})=\frac{T-\tau_1-\tau_2}{T}\left( \sum_{k=1}^{K_\mathrm{UL}}R_{\mathrm{UL},k}+\sum_{k=1}^{K_\mathrm{DL}}R_{\mathrm{DL},k}\right),\\
&f_2(\textbf{b})=\frac{W\cdot\left( \sum_{k=1}^{K_\mathrm{UL}}R_{\mathrm{UL},k}+\sum_{k=1}^{K_\mathrm{DL}}R_{\mathrm{DL},k}\right)}{P_\mathrm{TC}(\textbf{b})+P_\mathrm{T}+P_\mathrm{LP}+P_\mathrm{BH}},
\end{align}
\end{subequations}
where $\mathbf{b} = \{(b_1,\dots,b_{N_\mathrm{UL}});(b_1,\dots,b_{N_\mathrm{DL}});(b_1,\dots,b_{K_\mathrm{DL}})\}$ is the bit allocation scheme for UL RAUs, DL RAUs, and DL users. {The optimization problem is summarized as
\begin{equation}
\begin{aligned}
&\max\quad f=\left[ f_1(\mathbf{b}),f_2(\mathbf{b})\right] \\
& \begin{array}{r@{\quad}l@{}l@{\quad}l}
s.t.&{C1:}&B_\mathrm{U} \geq M\sum_{n=1}^{N+K_\mathrm{DL}} b_n, &b_n=1,2,... \\
&{C2:}&R_{\mathrm{UL},k}(\mathbf{b}) \geq R_{\mathrm{UL}}^\mathrm{min}, &k=1,...,K_\mathrm{UL} \\
&{C3:}&R_{\mathrm{DL},l}(\mathbf{b}) \geq R_{\mathrm{DL}}^\mathrm{min}, &k=1,...,K_\mathrm{DL} \\
&{C4:}& P_\mathrm{sum}(\mathbf{b}) \geq P_\mathrm{sum}(\mathbf{b}_\text{UQ,opt}), & \\
\end{array}
\end{aligned}
\end{equation}
{where $P_\mathrm{sum}(\mathbf{b})={P_\mathrm{TC}(\mathbf{b})+P_\mathrm{T}(\mathbf{b})+P_\mathrm{LP}(\mathbf{b})+P_\mathrm{BH}(\mathbf{b})}$. Constraint $C1$ expresses the limitation of the total bits of ADCs caused by the limited backhaul capacity, where $B_\mathrm{U}$ is the upper bound of the total ADC bits. Constraints $C2$ and $C3$ are the UL and DL average QoS requirements. Constraints $C4$ describes the EE lower bound as the power consumption corresponding to the optimal EE with guaranteed SE when quantized uniformly ($\mathbf{b}_\text{UQ,opt}$) for UL RAUs, DL RAUs, and DL users, respectively.}

\subsection{Problem Solution of ADC Bit Allocation}
Drawn lessons from the pioneer solutions which easily fall into local minima with slow convergence and high complexity when solving non-convex discrete optimization problems like Eq.~(\ref{equ:total}), DQN and NSGA-II is adopted for optimal solutions. 
\subsubsection{Problem Solution Based on DQN}
The solution process based on reinforcement learning (RL) methods includes agent, environment, reward, and action \cite{mnih2013playing}, which can be abstracted as a finite Markov decision process (MDP) \cite{sutton2018reinforcement}. Define the state and action at the $t$-th time-step as 
\begin{equation}
s_t \triangleq \mathbf{b}_t, a_t \triangleq \mathbf{b}^{\prime}_t,
\end{equation}
where $\mathbf{b}_t$ and $\mathbf{b}^{\prime}_t$ represent the quantization bit vector and the one-bit change at the $t$-th time-step. The reward is defined as
\begin{equation}
r_t\!=\!\frac{f_1(\mathbf{b})\!-\!\min\!f_1(\mathbf{b})}{\max\!f_1(\mathbf{b})\!-\!\min\!f_1(\mathbf{b})}\!+\!\frac{f_2(\mathbf{b})\!-\!\min\!f_2(\mathbf{b})}{\max\!f_2(\mathbf{b})\!-\!\min\!f_2(\mathbf{b})}\!-\!\widetilde{r},
\end{equation}
where $\widetilde{r}$ represents normalization constant. Concerning the Q-learning in RL, the optimization goal is maximizing the future returns $R_t$ by deciding the actions of each step in a certain way, i.e., $\max_{\pi} \mathbb{E}[R_{t}|s_t,a_t,\pi(a|s)]$. Hence, define the optimal action-value function as
\begin{equation}\label{equ:oav}
Q_i(s,a)=\mathbb{E}\left[r+\gamma\max_{a^{\prime}}Q_i(s^{\prime},a^{\prime})|(s,a)\right],
\end{equation}
where $\gamma$ is the discount factor and $ \pi(a|s)$ is the strategy to take action $a$ in the $s$ state. Meanwhile, Eq.~(\ref{equ:oav}) obeys the Bellman equation and guides value iteration with convergence boundary $i\leftarrow \infty, Q^{\ast}\leftarrow Q_i$. Unlike the typical linear methods, \cite{mnih2013playing} trains the neural network to extract advanced features with the following loss function 
\begin{equation}
\mathcal{L}_i(\theta_i)=\mathbb{E}_{s,a\sim \rho}\left[|y_i-Q(s,a;\theta_i)|^2\right],
\end{equation}
where $y_i=\mathbb{E}[r+\gamma\max_{a^{\prime}}Q(s^{\prime},a^{\prime};\theta_{i-1})|s,a]$ represents the target of the $i$-th iteration, and $\rho$ is the behaviour distribution, which adopt the $\epsilon$-greedy strategy (with a probability of $p_{gdy}=\epsilon$ taking a random action $a_\mathrm{rnd}$) to \bl{ensure} adequate exploration of state space. And then, calculate its gradient as
\begin{equation}\label{equ:SGD}
\nabla_{\theta_i}\mathcal{L}_i(\theta_i)=\mathbb{E}_{s,a\sim \rho}[(y_i-Q(s,a;\theta_i))\nabla_{\theta_i}Q(s,a;\theta_i)].
\end{equation}
Rather than the full expectation above, stochastic gradient descent (SGD) is routinely considered as a computationally-expedient alternative. To deal with the non-stationary distribution and high-correlation of data, Q-learning obtains the experience and pools them into replay memory $\mathcal{M}$, which is randomly sampled to finish the mini-batch SGD \cite{mnih2013playing}. The quantization bit allocation algorithm based on DQN is summarized in Algorithm~\ref{alg2} with $T_\mathcal{H}$ iterations.

\begin{algorithm}[htpb]\small
	\caption{Bit Allocation Algorithm Based on DQN} 
	\label{alg2}
	\begin{algorithmic}[1]
		\Require
		\State Initialize the state $s_0 $ and the replay memory $\mathcal{M}$
		\State Initialize action-value function $Q$ with random weights
		\Ensure 
		\State The state $s_{\max}$ with the optimal future reward
		\For{$t=1:T_\mathcal{H}$}
		\State Use $\epsilon$-greedy behavior policy to select action:
		\begin{equation}
		\nonumber
		a_t=\left\{
		\begin{aligned}
		&\max_{a} Q_t(s_t,a;\theta), &p_\mathrm{gdy}=1-\epsilon\\
		&a_\mathrm{rnd}, &p_\mathrm{gdy}=\epsilon
		\end{aligned}
		\right..
		\end{equation}
		\State Execute action $a_t$ and obtain the reward $r_t$
		\State Set the next stage $s_{t+1}=s_t\bigcup\{a_t\}$
		\State Store $\{s_t,a_t,r_t,s_{t+1}\}$ to $\mathcal{M}$
		\State Sample random minibatch of transitions from $\mathcal{M}$
		\State Calculate the target $y_i$ and perform SGD of Eq.~(\ref{equ:SGD})
		\EndFor
	\end{algorithmic}
\end{algorithm}

\subsubsection{Problem Solution Based on NSGA-II}
To analyze the problem more in-depth and choose quantization schemes more flexibly, we need to find the Pareto-optimal boundary of the bit optimization problem for ADC quantization, tricky for DQN. \bl{Multi-objective evolutionary algorithms have significant advantages when simultaneously solving all objectives in one simulation. Specifically, we employ the NSGA-II proposed in \cite{NSGA} with considerable reliability and effectiveness.} Detail steps based on DQN are summarized in Algorithm~\ref{alg2}.

\begin{algorithm}[htpb]\small
	\caption{Bit Allocation Algorithm Based on NSGA-II} 
	\label{alg1}
	\begin{algorithmic}[1]
		\Require 
		Randomly generate the initial value
		\Ensure
		Pareto-optimal solutions; SE and EE
		\Repeat 
		\State Use tournament selection to select a parent population
		\State Perform selection, mutation, and crossing
		\State Generate offspring population
		\State Calculate SE and EE of offspring population
		\State Merge the parent population and offspring populations
		\State Calculate SE and EE of merged population
		\State Sort the merged population based on non-dominating sorting and crowding distances
		\State Select a new initial population to replace the old one.
		\Until{The optimization conditions are satisfied or reach the maximum generation}
	\end{algorithmic}
\end{algorithm}
\begin{Rem} 
{When traditional methods deal with such nonconvex optimization problems, infinite local optimums exist in the set of feasible domains and convergence to the global optimum typically faces tremendous challenges since it has been proven as non-deterministic polynomial-time hard (NP-hard). The computational complexity of the DQN method is $\mathcal{O}(N_\mathbf{b}T^2_\mathcal{H})$; the complexity of NSGA-II method is $\mathcal{O}(N_\text{obj}N_\text{pop}^2)$, where $N_\mathbf{b}$ and $N_\text{obj}=2$, represent the number of low-resolution ADCs that need to be optimized and the objective functions, respectively. $N_\text{pop}$ is the population size. After random initialization, both methods can converge quickly to the corresponding Pareto-optimal points.}
\end{Rem}

\section{Numerical Results and Analysis}\label{sec:numerical result}

In this section, Monte-Carlo simulation is employed to verify the closed-form expression of SE in NAFD distributed M-MIMO with low-resolution ADCs, and the bit allocation algorithm is carried out for Pareto-optimal solutions with the tradeoff between the total SE and EE.
\subsection{Parameter Setting}

In the NAFD distributed M-MIMO system, the simulated area is limited to a circular area with a radius of $1$ km. All users and RAUs are randomly distributed, where $N_\mathrm{UL}=3, N_\mathrm{DL}=3, K_\mathrm{UL}=2, K_\mathrm{DL}=3$, and the minimum access distance to RAUs is $30$ m. \red{Assume that the number of antennas at each RAU is $M=10$, if not otherwise specified, and each RAU employs ADCs with the same resolution.} The path loss exponent is set to $\alpha_\mathrm{UL}=\alpha_\mathrm{DL}=3.7$, $\alpha_\mathrm{I}=3$. \bl{The DL and UL noise variances are $\sigma_\mathrm{DL}^2 = \sigma_\mathrm{UL}^2 = 1$ W.} The coherence time is $T=196$ symbols, the lengths of UL and DL pilot sequence are $\tau_1=K$ and $\tau_2$ = $K_\mathrm{DL}$, respectively. Hence, the number of symbols used for data transmission is ${T_\mathrm{data}}=T-\tau_1-\tau_2$, where \bl{the pre-log factor $\frac{T-\tau_1-\tau_2}{T}$ impacting the overall SE is considered.} Table~\ref{table2} presents all the parameters related to power consumption same as \cite{lv2018spectral}. {For the training details, the DQN is trained on the TensorFlow platform v1.14.0. The RMSProp algorithm is adopted with a mini-batch size of 32, and the learning rate is $0.01$, where $\gamma=0.9, \epsilon=0.9, \mathcal{M}=2000$, and the solution process needs $T_\mathcal{H}=1,000$ iterations for complete convergence. Both evaluate$\_$net and target$\_$net networks in DQN consist of fully connected DNN with 2 hidden layers with ReLU activation functions, where their dimensions are 9 and 18. The dimensions of the input and output layers of both evaluate$\_$net network and target$\_$net network are equal to 9 and 18. In NSGA-II, the population size is $N_\text{pop}=200$, and the max number of generations is $N_\text{gen}=300$. }

\begin{table}[htbp]
	\centering	
	\caption{Parameters of power consumption \cite{li2018spectral}.}	\label{table2}
	\begin{tabular}{cc}
		\hline
		Parameter & Value \\
		\hline
		UL transmission power $p_\mathrm{UL}$ & $500$ mW \\
		\hline
		DL transmission power $p_\mathrm{DL}$ & $500$ mW \\
		\hline
		UL pilot power $p_\mathrm{UP}$ & $500$ mW \\
		\hline
		DL pilot power $p_\mathrm{DP}$ &$ 1.00 $ W \\
		\hline
		Power consumption at RAU $P_\mathrm{RAU}$ & $100$ mW \\
		\hline
		Power consumption at user $P_\mathrm{UE}$ & $100$ mW \\
		\hline
		Local oscillator power consumption $P_\mathrm{SYN}$ & $1.00$ W \\
		\hline
		Calculation efficiency $L_\mathrm{RAU}$& $12.8$ Gflops/W\\
		\hline
		Amplifier efficiency $\xi$ & 0.4\\
		\hline
		Fixed power consumption $P_0$ & $0.825$ W\\
		\hline
		Dynamic power consumption $P_\mathrm{BT}$ & $0.25$ W/(Gbit/s)\\
		\hline
		ADC power coefficient $a_0,a_1$ & $10^{-4},0.02$\\
		\hline
	\end{tabular}
\end{table}

\subsection{Results and Discussion}
\subsubsection{Verification of Closed-Form Expressions}
{Considering the parameters mentioned above, assume all DL users are equipped with the same resolution ADC, and MRT pre-coding and ZFT pre-coding are considered. Based on Eq.~(\ref{Rdlpre}) in Theorem \ref{thm1} and Eq.~(\ref{Rdlpre2}) in Theorem \ref{thm2}, Fig.~\ref{DL_fitting} illustrates the relationship between the average DL SE and ADC bits $b$ of each DL user with MRT or ZFT beamforming training and with statistical CSI or estimated CSI.}
\begin{figure}[htbp] 
\centering 
\includegraphics[width=0.4\textwidth]{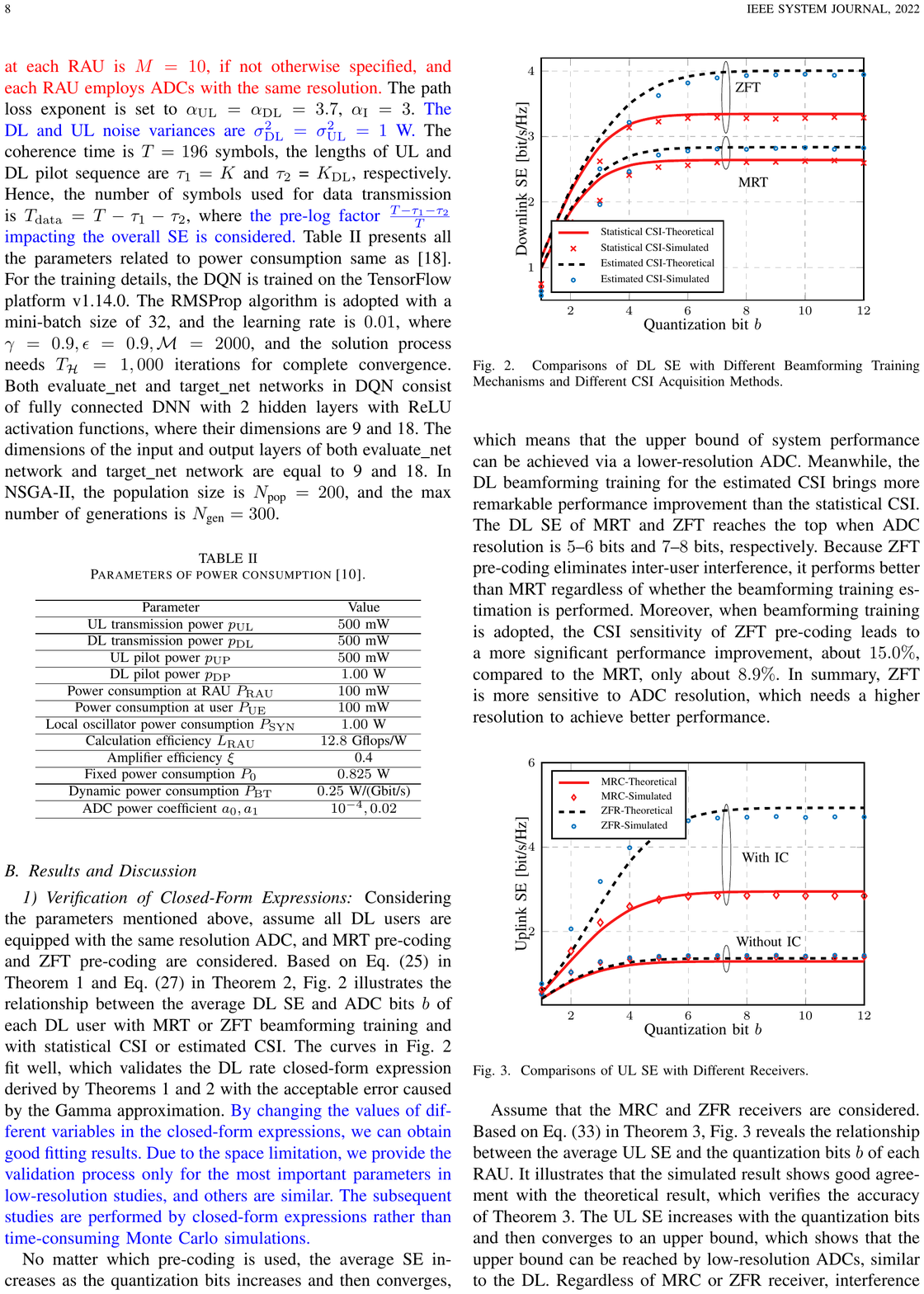}
\caption{Comparisons of DL SE with Different Beamforming Training Mechanisms and Different CSI Acquisition Methods.}
\label{DL_fitting}
\end{figure}
The curves in Fig.~\ref{DL_fitting} fit well, which validates the DL rate closed-form expression derived by Theorems \ref{thm1} and \ref{thm2} with the acceptable error caused by the Gamma approximation. \bl{By changing the values of different variables in the closed-form expressions, we can obtain good fitting results. Due to the space limitation, we provide the validation process only for the most important parameters in low-resolution studies, and others are similar. The subsequent studies are performed by closed-form expressions rather than time-consuming Monte Carlo simulations.}

No matter which pre-coding is used, the average SE increases as the quantization bits increases and then converges, which means that the upper bound of system performance can be achieved via a lower-resolution ADC. Meanwhile, the DL beamforming training for the estimated CSI brings more remarkable performance improvement than the statistical CSI. The DL SE of MRT and ZFT reaches the top when ADC resolution is $5$--$6$ bits and $7$--$8$ bits, respectively. Because ZFT pre-coding eliminates inter-user interference, it performs better than MRT regardless of whether the beamforming training estimation is performed. Moreover, when beamforming training is adopted, the CSI sensitivity of ZFT pre-coding leads to a more significant performance improvement, about $15.0\%$, compared to the MRT, only about $8.9\%$. In summary, ZFT is more sensitive to ADC resolution, which needs a higher resolution to achieve better performance. 

\begin{figure}[htbp] 
\centering 
\includegraphics[width=0.4\textwidth]{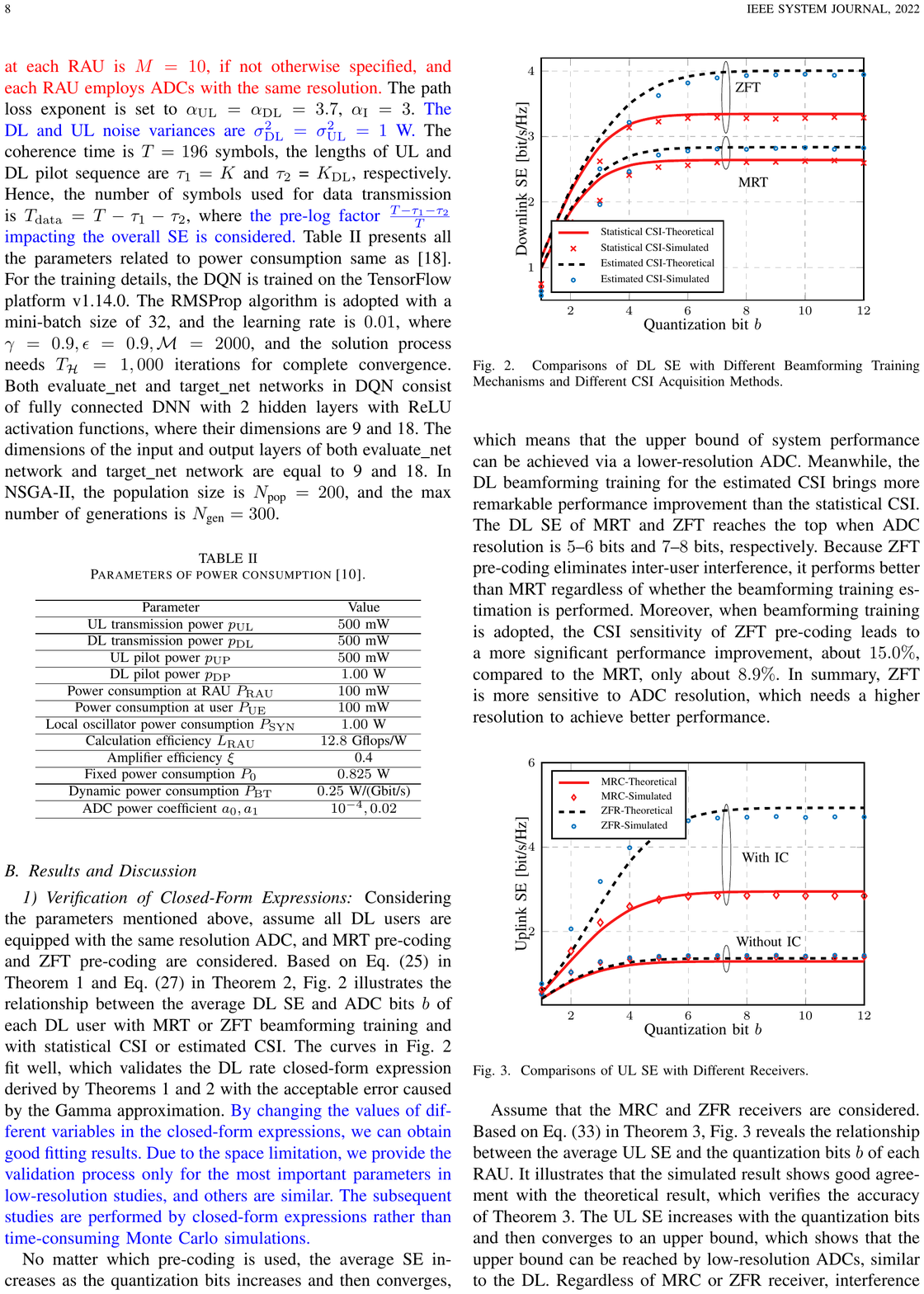}
\caption{Comparisons of UL SE with Different Receivers.} 
\label{UL_fitting}
\end{figure}

{Assume that the MRC and ZFR receivers are considered. Based on Eq.~(\ref{RulMRC}) in Theorem~\ref{thm3}, Fig.~\ref{UL_fitting} reveals the relationship between the average UL SE and the quantization bits $b$ of each RAU. It illustrates that the simulated result shows good agreement with the theoretical result, which verifies the accuracy of Theorem~\ref{thm3}.} The UL SE increases with the quantization bits and then converges to an upper bound, which shows that the upper bound can be reached by low-resolution ADCs, similar to the DL. Regardless of MRC or ZFR receiver, interference cancellation brings a tremendous performance improvement, proving the DL beamforming training is helpful for UL RAUs to eliminate interference between RAUs. The gap between MRC and ZFR receivers with CLI cancellation blows up since the strength of elimination determines the SE attenuation.

\subsubsection{Relationship Between the Total SE and EE}
Next, we explore the relationship between the total SE and EE in Eq.~(\ref{equ:total}). For the MRT pre-coding and MRC receiver, Fig.~\ref{sumSE-EE1} illustrates the total EE \textit{v.s.} SE with different numbers of quantization bits and antennas per RAU. Different lines represent different numbers of quantization bits $B\in\{b|4\leqslant b \leqslant 9,b\in\mathbb{N}\}$, and different points represent different numbers of antennas $M\in\{2m|3\leqslant m \leqslant 16,m\in\mathbb{N}\}$. For different lines, as the ADC accuracy increases, the SE gradually increases, and the EE increases and then decreases. For different points on the same line, as the number of antennas on each RAUs increases, the SE increases, and the EE increases and then decreases. It reveals that both EE and SE increase with the growth of the antenna and bit widths when the improvement of SE first dominates, followed by EE's prevailing.

\begin{figure}[htbp]
	\centering 
	\includegraphics[width=0.4\textwidth]{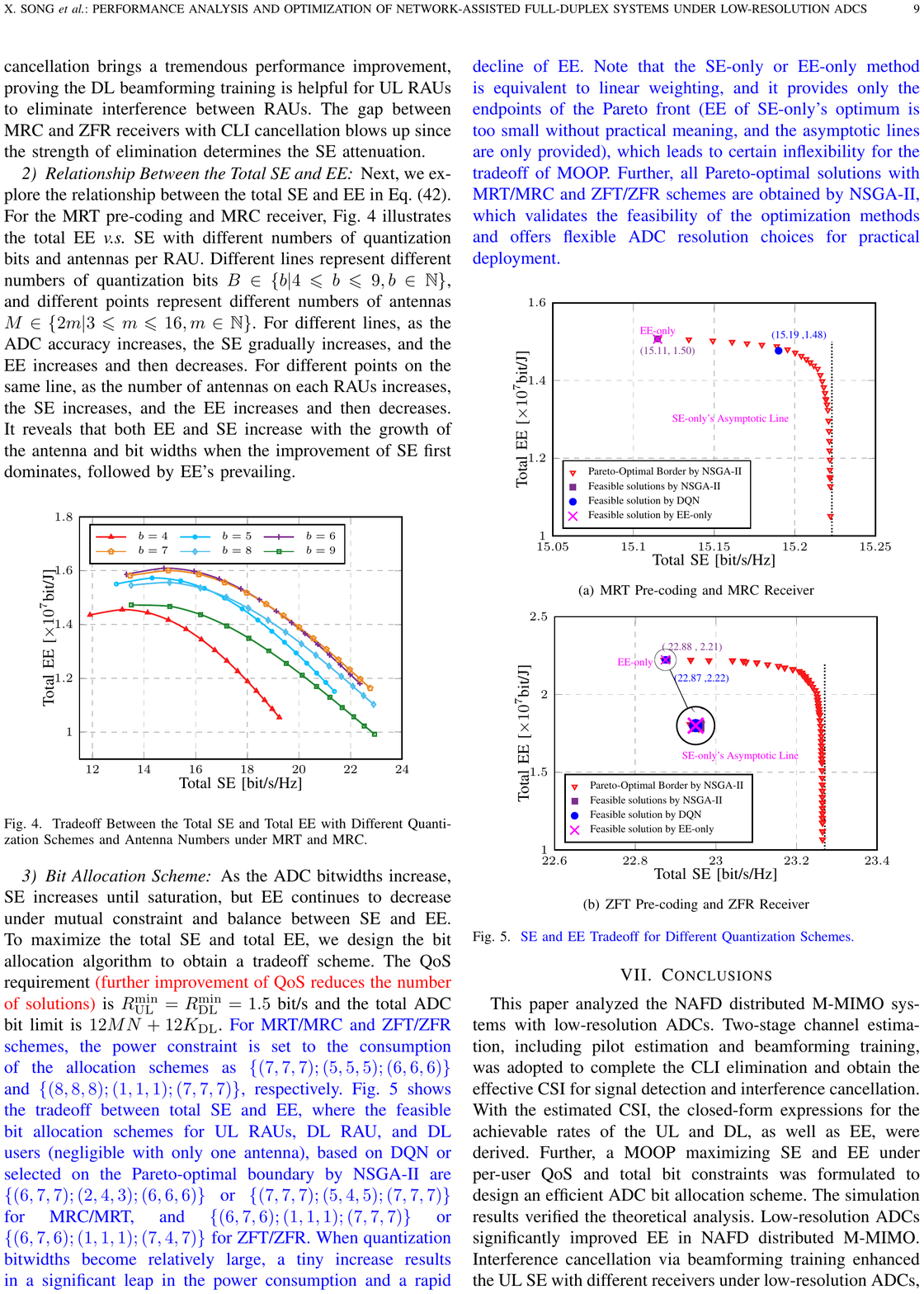} 		
	\caption{Tradeoff Between the Total SE and Total EE with Different Quantization Schemes and Antenna Numbers under MRT and MRC.} 
	\label{sumSE-EE1}
\end{figure}

\subsubsection{Bit Allocation Scheme}
As the ADC bitwidths increase, SE increases until saturation, but EE continues to decrease under mutual constraint and balance between SE and EE. To maximize the total SE and total EE, we design the bit allocation algorithm to obtain a tradeoff scheme. The QoS requirement \red{(further improvement of QoS reduces the number of solutions)} is $R_\mathrm{UL}^{\min}=R_\mathrm{DL}^{\min}= 1.5$ bit/s and the total ADC bit limit is $12MN+12K_\mathrm{DL}$. \bl{For MRT/MRC and ZFT/ZFR schemes, the power constraint is set to the consumption of the allocation schemes as $\{(7,7,7);(5,5,5);(6,6,6)\}$ and $\{(8,8,8);(1,1,1);(7,7,7)\}$, respectively. Fig.~\ref{nsga2} shows the tradeoff between total SE and EE, where the feasible bit allocation schemes for UL RAUs, DL RAU, and DL users (negligible with only one antenna), based on DQN or selected on the Pareto-optimal boundary by NSGA-II are $\{(6,7,7);(2,4,3);(6,6,6)\}$ or $\{(7,7,7);(5,4,5);(7,7,7)\}$ for MRC/MRT, and $\{(6,7,6);(1,1,1);(7,7,7)\}$ or $\{(6,7,6);(1,1,1);(7,4,7)\}$ for ZFT/ZFR. When quantization bitwidths become relatively large, a tiny increase results in a significant leap in the power consumption and a rapid decline of EE. Note that the SE-only or EE-only method is equivalent to linear weighting, and it provides only the endpoints of the Pareto front (EE of SE-only's optimum is too small without practical meaning, and the asymptotic lines are only provided), which leads to certain inflexibility for the tradeoff of MOOP. Further, all Pareto-optimal solutions with MRT/MRC and ZFT/ZFR schemes are obtained by NSGA-II, which validates the feasibility of the optimization methods and offers flexible ADC resolution choices for practical deployment.}

\begin{figure}[htbp]
	\centering 
	\subfigure[MRT Pre-coding and MRC Receiver]{
		\includegraphics[width=0.4\textwidth]{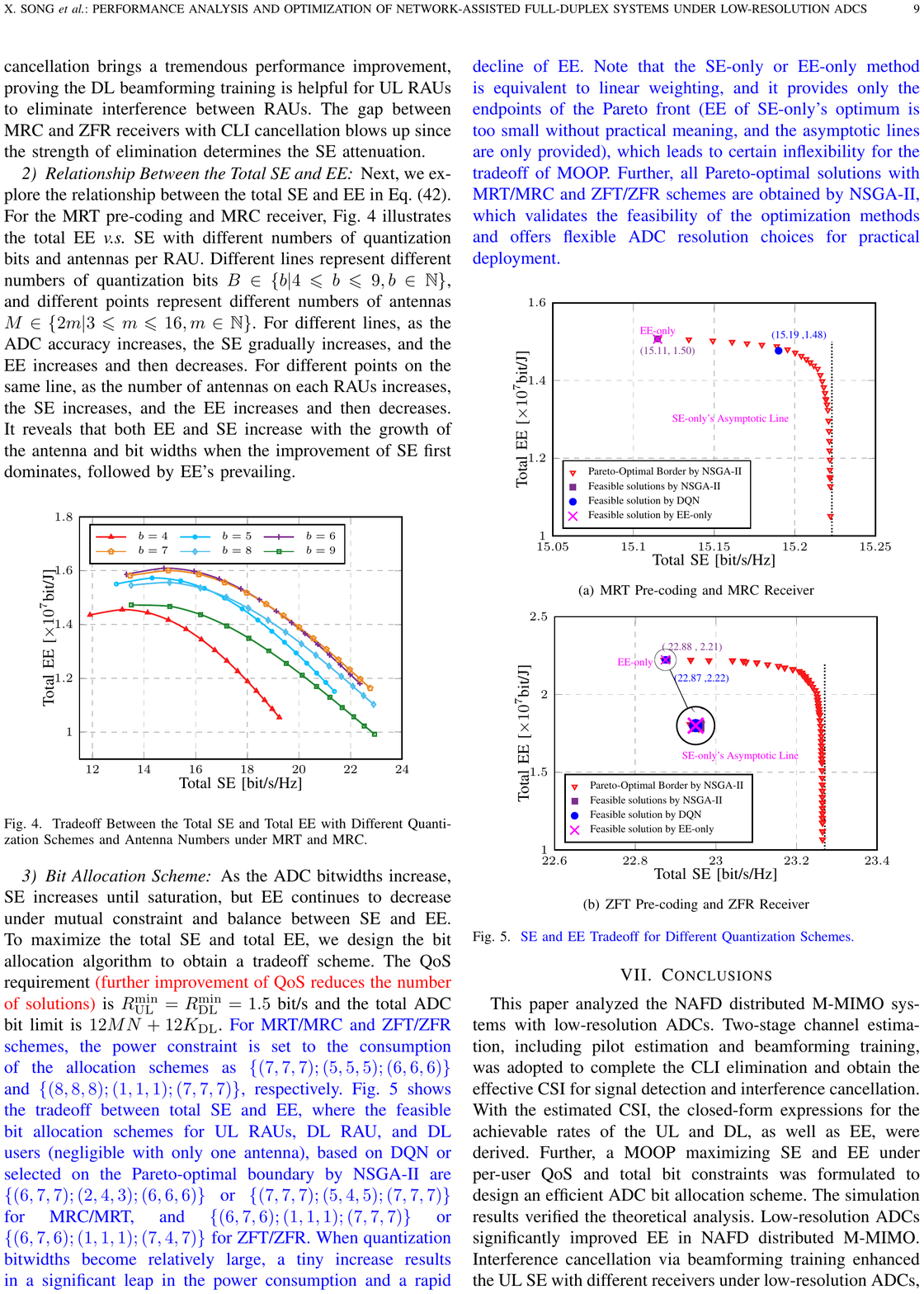}\label{fig:mrt}
		
	}
	\subfigure[ZFT Pre-coding and ZFR Receiver]{
		\includegraphics[width=0.4\textwidth]{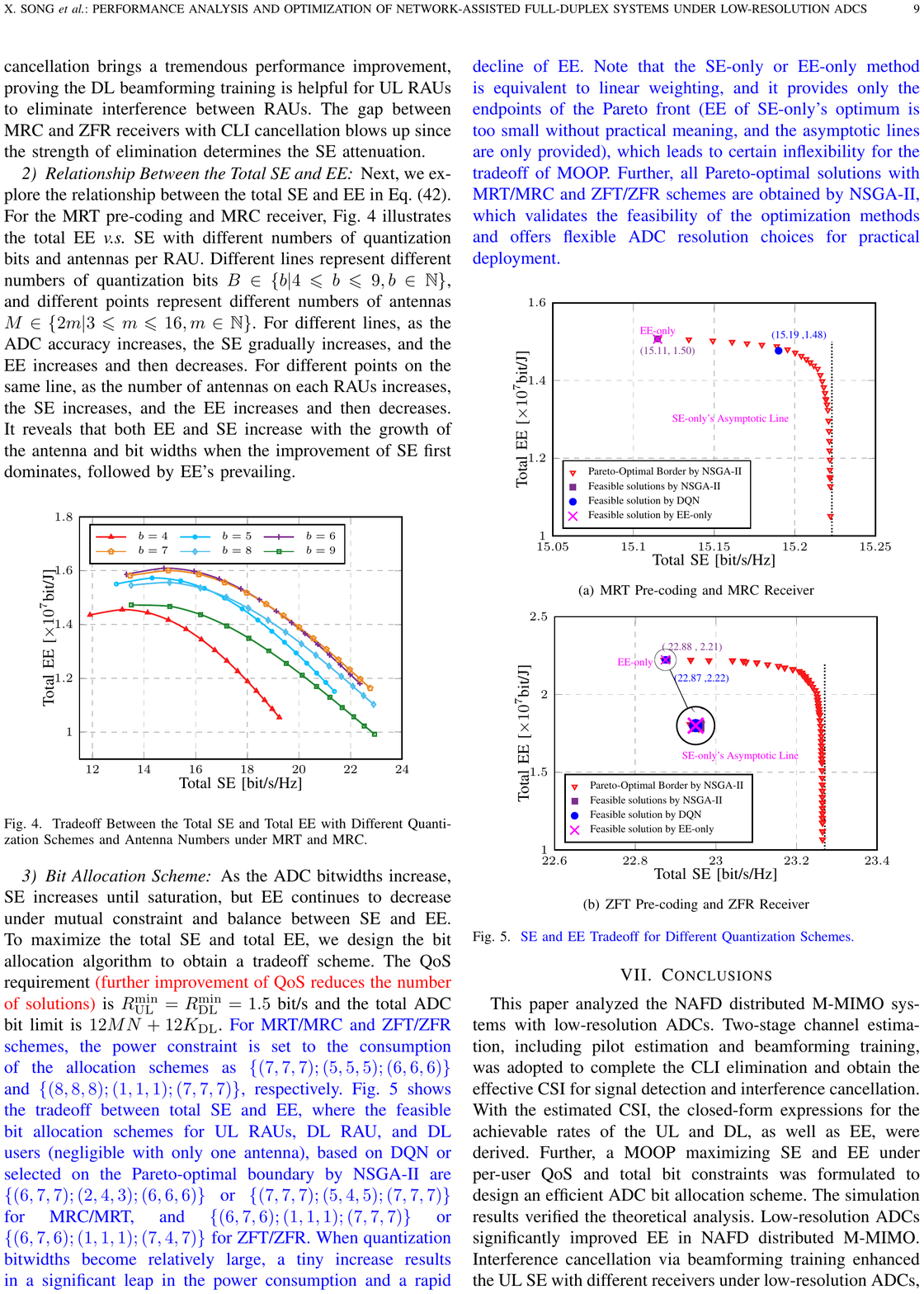}\label{fig:zf}	
	}
	\caption{\bl{SE and EE Tradeoff for Different Quantization Schemes.}} 
	\label{nsga2}
\end{figure}

\vspace{-0.5cm}
\section{Conclusions}\label{sec:conclusion}
This paper analyzed the NAFD distributed M-MIMO systems with low-resolution ADCs. Two-stage channel estimation, including pilot estimation and beamforming training, was adopted to complete the CLI elimination and obtain the effective CSI for signal detection and interference cancellation. With the estimated CSI, the closed-form expressions for the achievable rates of the UL and DL, as well as EE, were derived. Further, a MOOP maximizing SE and EE under per-user QoS and total bit constraints was formulated to design an efficient ADC bit allocation scheme. The simulation results verified the theoretical analysis. Low-resolution ADCs significantly improved EE in NAFD distributed M-MIMO. Interference cancellation via beamforming training enhanced the UL SE with different receivers under low-resolution ADCs, and channel estimation of DL users refined the DL SE, especially for the ZFT pre-coding and ZFR receiver. DQN and NSGA-II were selected to solve the bit allocation problem, which obtained Pareto-optimal solutions for ADC resolution providing great flexibility for system development.

\vspace{-0.5cm}
\begin{appendices}
\section{Mathematical Results}\label{appendix 1}
\begin{Lemma}\label{Lemma 1}
	$\bm{x}\in\mathbb{C}^{m\times 1}$ distributes as $\bm{x}_i\sim \mathcal{CN}(0,\sigma_i^2\textbf{I})$ \cite{li2017downlink},
	\begin{subequations}
	\begin{align}
	\bm{x} _i^\mathrm{H}\bm{x}_i &\sim \Gamma(m, \sigma_i^2),\\
	\sum_i\bm{x}_i^\mathrm{H}\bm{x}_i &\sim \Gamma\left(m\frac{\left(\sum_i\sigma_i^2\right)^2}{\sum_i\sigma_i^4}, \frac{\sum_i\sigma_i^4}{\sum_i\sigma_i^2}\right).
	\end{align}
	\end{subequations}
\end{Lemma}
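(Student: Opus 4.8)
The plan is to handle the two parts separately: part (a) follows from a direct identification of the Hermitian quadratic form with a sum of exponentials, while part (b) is established by a second-order moment-matching argument that approximates a sum of Gamma variables with distinct scales by a single Gamma variable.

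For part (a), I would first expand the quadratic form as $\bm{x}_i^\mathrm{H}\bm{x}_i=\sum_{j=1}^m|x_{ij}|^2$, where $x_{ij}$ denotes the $j$-th entry of $\bm{x}_i$. Since $\bm{x}_i\sim\mathcal{CN}(0,\sigma_i^2\mathbf{I})$, the entries $x_{ij}$ are i.i.d. $\mathcal{CN}(0,\sigma_i^2)$, so each squared magnitude $|x_{ij}|^2$ is exponentially distributed with mean $\sigma_i^2$, equivalently $\Gamma(1,\sigma_i^2)$. Invoking the closure of the Gamma family under summation at a common scale parameter, the sum of $m$ such independent terms yields $\bm{x}_i^\mathrm{H}\bm{x}_i\sim\Gamma(m,\sigma_i^2)$, which is the claimed result.

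For part (b), I would note that $\sum_i\bm{x}_i^\mathrm{H}\bm{x}_i$ is a sum of \emph{independent} Gamma variables $Y_i\sim\Gamma(m,\sigma_i^2)$ whose scale parameters $\sigma_i^2$ differ across $i$, so the exact law is not itself Gamma. The strategy is therefore to fit a single $\Gamma(k,\theta)$ by matching the first two moments. Using independence and the standard Gamma moments $\mathbb{E}[Y_i]=m\sigma_i^2$ and $\mathrm{Var}(Y_i)=m\sigma_i^4$, I would compute $\mathbb{E}\bigl[\sum_iY_i\bigr]=m\sum_i\sigma_i^2$ and $\mathrm{Var}\bigl(\sum_iY_i\bigr)=m\sum_i\sigma_i^4$. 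Equating these to the mean $k\theta$ and variance $k\theta^2$ of $\Gamma(k,\theta)$ and solving the resulting pair of equations gives $\theta=\sum_i\sigma_i^4/\sum_i\sigma_i^2$ and $k=m(\sum_i\sigma_i^2)^2/\sum_i\sigma_i^4$, exactly the stated parameters.

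The main subtlety to flag is that the ``$\sim$'' in part (b) must be read as an approximate distributional identity rather than an exact one: a finite sum of independent Gamma variables with unequal scales has no closed-form Gamma law, and the stated result is the standard moment-matched surrogate. I expect the calculation itself to be routine; the only care needed is to state clearly that the approximation is second-order (mean and variance), which is what makes it tractable for the downstream closed-form SE/EE derivations, and to confirm that the independence of the $\bm{x}_i$ is what licenses the additivity of the variances.
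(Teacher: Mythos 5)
Your proof is correct and is exactly the standard argument behind this result: the paper itself gives no proof of Lemma~1 (it simply cites \cite{li2017downlink}), and the derivation in that line of work is the same one you give --- exponential entries summing to $\Gamma(m,\sigma_i^2)$ for part (a), and second-order (mean--variance) moment matching for part (b). Your caveat that the ``$\sim$'' in part (b) is only a moment-matched approximation is well placed and consistent with how the paper uses the lemma, since the authors themselves attribute the residual mismatch in their simulations to ``the acceptable error caused by the Gamma approximation.''
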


\begin{Lemma}
	$\bm{x}\in\mathbb{C}^{m\times 1}$ satisfies $\bm{x}^\mathrm{H}\bm{x}\sim\Gamma(k,\theta)$, and its elements are independent non-identical distributions, when $\bm{x}$ is projected to an $s$-dimensional subspace, the distribution of projection power can be approximated as $\Gamma(sk/m,\theta)$ \cite{li2017downlink}.
\end{Lemma}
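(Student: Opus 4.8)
The plan is to prove this by a moment-matching (Welch--Satterthwaite style) argument in which the scale parameter $\theta$ is held fixed and the shape parameter is rescaled by the dimension-reduction ratio $s/m$. The guiding intuition is that $\theta$ plays the role of a per-mode power level while the shape parameter $k$ counts the effective number of independent power modes, so projecting $\bm{x}$ onto an $s$-dimensional subspace should retain a fraction $s/m$ of those modes without altering their individual scale.

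First I would fix the exact isotropic baseline as motivation. If the entries of $\bm{x}$ were i.i.d.\ complex Gaussian with common variance $\theta$, then $\bm{x}^{\mathrm{H}}\bm{x}\sim\Gamma(m,\theta)$, and the orthogonal projection $\mathbf{P}\bm{x}$ onto any fixed $s$-dimensional subspace is again isotropic complex Gaussian \emph{within} that subspace with the same per-component variance $\theta$. Hence $\|\mathbf{P}\bm{x}\|^2\sim\Gamma(s,\theta)$ \emph{exactly}, which is precisely $\Gamma(sk/m,\theta)$ since $k=m$ here. This confirms both the form of the shape rescaling and the invariance of the scale $\theta$ in the exactly-isotropic case, and it is the template the general statement approximates.

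Next I would treat the non-identical case by matching the first moment of the projected power. Writing the projection power as $\bm{x}^{\mathrm{H}}\mathbf{P}\bm{x}$ with $\mathbf{P}^{\mathrm{H}}\mathbf{P}=\mathbf{P}$, and using that a generically-oriented $s$-dimensional projector satisfies $\mathbb{E}[\mathbf{P}]=(s/m)\mathbf{I}_m$ (the Grassmannian average is isotropic), the expected projected power is
\begin{equation}
\mathbb{E}\!\left[\bm{x}^{\mathrm{H}}\mathbf{P}\bm{x}\right]=\frac{s}{m}\,\mathbb{E}\!\left[\|\bm{x}\|^2\right]=\frac{s}{m}\,k\theta .
\end{equation}
Matching this to the mean $\tilde{k}\theta$ of the target $\Gamma(\tilde{k},\theta)$ while keeping the scale fixed at $\theta$ gives $\tilde{k}=sk/m$, as claimed.

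The hard part will be justifying that the scale parameter $\theta$ is genuinely preserved under projection when the components of $\bm{x}$ are only independent and not identically distributed: for strictly non-identical variances the projected power is not exactly Gamma, and a second-moment check would return a slightly different effective scale. The resolution is to invoke the effective-degrees-of-freedom viewpoint already used in Lemma~\ref{Lemma 1}, where a sum of heterogeneous contributions is \emph{approximated} by a single $\Gamma(k,\theta)$; under that same approximation the per-mode scale $\theta$ is treated as fixed and only the mode count scales with $s/m$. This is exactly why the statement is phrased as an approximation rather than an identity, and it is the step I expect to require the most care to state precisely.
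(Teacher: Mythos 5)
The paper never proves this lemma: it is imported directly from \cite{li2017downlink} with only the citation as justification, so there is no in-paper argument to compare against. Your reconstruction is, as far as I can tell, the standard rationale behind the cited result, and it is sound. The isotropic baseline is exact and correctly identifies why the scale $\theta$ is the natural invariant; the first-moment match $\mathbb{E}[\bm{x}^{\mathrm{H}}\mathbf{P}\bm{x}]=(s/m)k\theta$ then forces the shape $sk/m$ once $\theta$ is pinned; and you correctly locate the genuinely approximate step, namely that for heterogeneous variances a second-moment (full Welch--Satterthwaite) fit would return a perturbed scale, so holding $\theta$ fixed is a modeling choice inherited from the same regime in which Lemma~\ref{Lemma 1} is already an approximation.

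One refinement you should make explicit, because the lemma as stated elides it: the claim is \emph{false} for an arbitrary fixed subspace when the variances are non-identical (project onto the $s$ lowest-variance coordinates and both mean and shape are wrong), so your hypothesis that the projector is generically oriented with $\mathbb{E}[\mathbf{P}]=(s/m)\mathbf{I}_m$, independent of $\bm{x}$, is not a convenience but a necessary condition. It is also exactly the condition under which the paper applies the lemma: in Appendices~\ref{appendix 2} and \ref{appendix 5} the projections are onto beamforming or receive vectors of \emph{other} users, or onto ZF null spaces of dimension $N_{\mathrm{DL}}M-K_{\mathrm{DL}}+1$, all independent of the channel vector being projected (``any vector independent of the channel vector is placed in a one-dimensional space''). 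A supporting sanity check worth adding: for $s=1$ and i.i.d.\ entries ($k=m$), the conditional law of the projected power given the unit vector is exactly exponential, i.e.\ $\Gamma(1,\theta)=\Gamma(sk/m,\theta)$, so the formula is exact there; for heterogeneous variances one has $k<m$, hence shape $k/m<1$, which correctly mimics the heavier-than-exponential tail of the resulting mixture of exponentials. With the independence hypothesis stated, your argument is a complete and defensible justification of the approximation.
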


\section{Proof of Proposition 1}\label{appendix 2}
\begin{proof}
Consider MRT pre-coding as $\mathbf{w}_i=\frac{\hat{\mathbf{g}}_{\mathrm{DL},i}}{\left\| \hat{\mathbf{g}}_{\mathrm{DL},i}\right\| }$, $
\mathbb{E}\left[ \mu_{k,i}\right]=\mathbb{E}\left[ \mathbf{g}_{\mathrm{DL},k}\mathbf{w}_i\right]$. When $i=k$, $\mathbb{E}\left[ \mu_{k,i}\right]=\mathbb{E}\left[ \left\| \hat{\mathbf{g}}_{\mathrm{DL},k}\right\|\right] $. According to the Lemmas introduced in Appendix~\ref{appendix 1}, we have
\begin{subequations}
\begin{align}
&\left\| \hat{\mathbf{g}}_{\mathrm{DL},n,k}\right\|^2\sim\Gamma\left( M,\beta_{\mathrm{DL},n,k}\right)\\
&\left\| \hat{\mathbf{g}}_{\mathrm{DL},k}\right\|^2=\sum_{n=1}^{N_{\mathrm{DL}}} \left\| \hat{\mathbf{g}}_{\mathrm{DL},n,k}\right\|^2\sim\Gamma\left( k_{\mathrm{DL},k},\theta_{\mathrm{DL},k}\right)
\end{align}
\end{subequations}
where
\begin{small}
\begin{subequations}
\begin{align}
&k_{\mathrm{DL},k}=\frac{M\left( \sum_{n=1}^{N_{\mathrm{DL}}}\beta_{\mathrm{DL},n,k}\right) ^2}{\sum_{n=1}^{N_{\mathrm{DL}}}\beta_{\mathrm{DL},n,k}^2},\theta_{\mathrm{DL},k}=\frac{\sum_{n=1}^{N_{\mathrm{DL}}}\beta_{\mathrm{DL},n,k}^2}{\sum_{n=1}^{N_{\mathrm{DL}}}\beta_{\mathrm{DL},n,k}},\\
&\tilde{k}_{\mathrm{DL},k}=\frac{M\left( \sum_{n=1}^{N_{\mathrm{DL}}}\eta_{\mathrm{DL},n,k}\right) ^2}{\sum_{n=1}^{N_{\mathrm{DL}}}\eta_{\mathrm{DL},n,k}^2},\tilde{\theta}_{\mathrm{DL},k}=\frac{\sum_{n=1}^{N_{\mathrm{DL}}}\eta_{\mathrm{DL},n,k}^2}{\sum_{n=1}^{N_{\mathrm{DL}}}\eta_{\mathrm{DL},n,k}}.
\end{align}
\end{subequations}
\end{small}

From the relationship between Gamma and Nakagami distribution, $\left\| \hat{\mathbf{g}}_{\mathrm{DL},k}\right\| \sim \mathrm{Nakagami}(k_{\mathrm{DL},k},k_{\mathrm{DL},k}\theta_{\mathrm{DL},k})$, 
\begin{equation}
\mathbb{E}\left[ \left\| \hat{\mathbf{g}}_{\mathrm{DL},k}\right\|\right]=\frac{\Gamma\left( k_{\mathrm{DL},k}+{1}/{2}\right) }{\Gamma\left( k_{\mathrm{DL},k}\right)}\theta_{\mathrm{DL},k}^{1/2}.
\end{equation}

When $i\neq k$, $\hat{\mathbf{g}}_{\mathrm{DL},k}$ is independent of $\hat{\mathbf{g}}_{\mathrm{DL},i}$, $\mathbb{E}\left[ \mu_{k,i}\right]=0$. Combining the above two cases, 
\begin{equation}\label{Emu}
\mathbb{E}\left[ \mu_{k,i}\right]=\left\{
\begin{aligned}
&\frac{\Gamma\left( k_{\mathrm{DL},k}+{1}/{2}\right) }{\Gamma\left( k_{\mathrm{DL},k}\right)}\theta_{\mathrm{DL},k}^{1/2}, &i=k\\
&0,&i\neq k
\end{aligned}
\right..
\end{equation}

Moreover, we have
\begin{equation}\label{covmur}
\begin{aligned}
\mathrm{cov}(\mu_{k,i},\tilde{\bm{r}}_{e,k}^{p,i})
&=\mathbb{E}\left[ \mu_{k,i}\tilde{\bm{r}}_{e,k}^{p,i}\right]-\mathbb{E}\left[ \mu_{k,i}\right]\mathbb{E}\left[ \tilde{\bm{r}}_{e,k}^{p,i}\right]\\
&=\xi_k\sqrt{p_{\mathrm{DP}}}\left( \mathbb{E}\left[ \mu_{k,i}^2\right]-\mathbb{E}\left[ \mu_{k,i}\right]^2\right),
\end{aligned}			 
\end{equation}
where
\begin{equation}\label{mu^2}
\begin{aligned}
&\mathbb{E}\left[ \mu_{k,i}^2\right]=\mathbb{E}\left[ \left\| \hat{\mathbf{g}}_{\mathrm{DL},k}\mathbf{w}_i\right\|^2\right] +\mathbb{E}\left[ \left\| \tilde{\mathbf{g}}_{\mathrm{DL},k}\mathbf{w}_i\right\|^2\right] .
\end{aligned}
\end{equation}
	
If RAUs use MRT pre-coding, the pre-coding vector is placed in the $N_{\mathrm{DL}}M$-dimensional space. If RAUs use ZFT pre-coding, the pre-coding vector is placed in $N_{\mathrm{DL}}M-K_{\mathrm{DL}}+1$ dimensional space. With the Lemmas in Appendix \ref{appendix 1},
\begin{equation}
\left\{
\begin{aligned}
&\left\| \hat{\mathbf{g}}_{\mathrm{DL},k}\mathbf{w}_i\right\|^2\sim \Gamma\left( k_{\mathrm{DL},k},\theta_{\mathrm{DL},k}\right), &i=k\\
&\left\| \hat{\mathbf{g}}_{\mathrm{DL},k}\mathbf{w}_i\right\|^2\sim \Gamma\left( \frac{1}{N_{\mathrm{DL}}M}k_{\mathrm{DL},k},\theta_{\mathrm{DL},k}\right), &i\neq k\\
&\left\| \tilde{\mathbf{g}}_{\mathrm{DL},k}\mathbf{w}_i\right\|^2\sim \Gamma\left( \frac{1}{N_{\mathrm{DL}}M}\tilde{k}_{\mathrm{DL},k},\tilde{\theta}_{\mathrm{DL},k}\right), &i\neq k\\
\end{aligned}
\right..
\end{equation}
Therefore,
\begin{equation}\label{Emu^2}
\mathbb{E}\left[ \mu_{k,i}^2\right]=\left\{
\begin{aligned}
&k_{\mathrm{DL},k}\theta_{\mathrm{DL},k}+\frac{\tilde{k}_{\mathrm{DL},k}\tilde{\theta}_{\mathrm{DL},k}}{N_{\mathrm{DL}}M}, &i=k\\
&\frac{ k_{\mathrm{DL},k}\theta_{\mathrm{DL},k}+\tilde{k}_{\mathrm{DL},k}\tilde{\theta}_{\mathrm{DL},k}}{N_{\mathrm{DL}}M}, &i\neq k
\end{aligned}
\right..
\end{equation}
Substituting Eqs.~(\ref{Emu^2}) and (\ref{Emu}) into Eq.~(\ref{covmur}), we have
\begin{equation}\label{covmur2}
\mathrm{cov}(\mu_{k,i},\tilde{\bm{r}}_{e,k}^{p,i})=\left\{
\begin{aligned}
&\xi_k\sqrt{p_{\mathrm{DP}}}\bar{\chi}_k^\mathrm{MR}, &i=k\\
&\xi_k\sqrt{p_{\mathrm{DP}}}{\chi}_k^\mathrm{MR}, &i\neq k
\end{aligned}
\right..
\end{equation}
Meanwhile,
\begin{equation}
\begin{aligned}
&\mathrm{cov}(\tilde{\bm{r}}_{e,k}^{p,i},\tilde{\bm{r}}_{e,k}^{p,i})\\
&=\mathbb{E}[\xi_k^2p_{\mathrm{DP}}\mu_{k,i}^2\!+\!\xi_k^2\mathrm{n}_{\mathrm{DP},i}^2\!+\!\tilde{\mathrm{n}}_{\mathrm{DP},i}^2]\!-\!\mathbb{E}\left[ \xi_k\sqrt{p_{\mathrm{DP}}}\mu_{k,i}\right]^2\\
&=\xi_k\left( p_{\mathrm{DP}} \bar{\chi}_k^\mathrm{MR}+\sigma_{\mathrm{DP}}^2\right)\!\!+\!\underbrace{(\xi_k\!-\!\xi_k^2)p_{\mathrm{DP}}(E_k^\mathrm{MR})^2}_{\text{when} \,\,i=k},
\end{aligned}
\end{equation}	
where
\begin{small}
\begin{equation}
\begin{aligned}
&\mathbb{E}\left[ \tilde{\mathrm{n}}_{\mathrm{DP},i}^2\right]=\xi_k(1-\xi_k)\mathbb{E}\left[ (\sqrt{p_{\mathrm{DP}}}\mu_{k,i}+\mathrm{n}_{\mathrm{DP},i})^2\right]\\
&=\left\{
\begin{aligned}
&\xi_k(1-\xi_k)\left( p_{\mathrm{DP}}\left[ \bar{\chi}_k^\mathrm{MR}+(E_k^\mathrm{MR})^2\right] +\sigma_{\mathrm{\mathrm{DP}}}^2\right), &i=k\\
&\xi_k(1-\xi_k)\left( p_{\mathrm{DP}}{\chi}_k^\mathrm{MR} +\sigma_{\mathrm{DP}}^2\right), &i\neq k	
\end{aligned} 
\right..
\end{aligned}
\end{equation}
\end{small}

Finally, we obtain
\begin{small}
\begin{equation}\label{Er}
\begin{split}
\mathbb{E}\left[\tilde{\bm{r}}_{e,k}^{p,i}\right]&=\xi_k\sqrt{p_{\mathrm{DP}}}\mathbb{E}\left[ \mu_{k,i}\right] =\left\{
\begin{aligned}
&\xi_k\sqrt{p_{\mathrm{DP}}}E_k^\mathrm{MR}, &i=k\\
&0, &i\neq k
\end{aligned}
\right..
\end{split}
\end{equation}
\end{small}

Next is a similar proof of the ZFT. There is only a small difference between ZFT and MRT. If RAUs use ZFT pre-coding, the pre-coding vector is placed in $N_{\mathrm{DL}}M-K_{\mathrm{DL}}+1$ dimensional space. When $i=k$, $\left\| \hat{\mathbf{g}}_{\mathrm{DL},k}\mathbf{w}_i\right\|^2\sim\Gamma\left( t_{\mathrm{DL}}k_{\mathrm{DL},k},\theta_{\mathrm{DL},k}\right) $, and
\begin{equation}
\mathbb{E}\left[ \mu_{k,i}\right]=\left\{
\begin{aligned}
&\frac{\Gamma\left( t_{\mathrm{DL}}k_{\mathrm{DL},k}+{1}/{2}\right) }{\Gamma\left( t_{\mathrm{DL}}k_{\mathrm{DL},k}\right)}\theta_{\mathrm{DL},k}^{1/2}, &i=k\\
&0,&i\neq k
\end{aligned}
\right..
\end{equation}
Other than that, it is the same as MRT. Therefore, we omit the proof of ZFT due to the limitation of space.

\end{proof}
	
\vspace{-0.5cm}
\section{Proof of Proposition 2}\label{appendix 3}
\begin{proof}
Since $\mathbf{G}_\mathrm{I}$ and $\mathbf{w}_{i}$ are independent, $\mathrm{cov}(\mathbf{f}_i,\tilde{\mathbf{y}}_i^p)\!=\!\mathbb{E}[ \mathbf{f}_i(\tilde{\mathbf{y}}_i^p)^\mathrm{H}]-\mathbb{E}\left[ \mathbf{f}_i\right]\mathbb{E}\left[\tilde{\mathbf{y}}_i^p\right]{=}\sqrt{p_{\mathrm{DP}}}\mathbf{A}\mathbb{E}[ \mathbf{f}_i\mathbf{f}_i^\mathrm{H}]$, where $\mathbb{E}\left[ \mathbf{f}_i\right]=\mathbf{0}$, $\mathbb{E}\left[ \mathbf{y}_i^p\right]=\mathbf{0}$. $\mathbf{g}_{\mathrm{I},k}$ is the $k$-th line of the interference channel $\mathbf{G}_\mathrm{I}$. Next, $\mathbb{E}\left[ \mathbf{g}_{\mathrm{I},k}\mathbf{w}_i\mathbf{w}_i^\mathrm{H}\mathbf{g}_{\mathrm{I},j}^\mathrm{H}\right]=0$, when $j\neq k$. Otherwise, $
\mathbb{E}\left[ \mathbf{g}_{\mathrm{I},k} \mathbf{w}_i\mathbf{w}_i^\mathrm{H}\mathbf{g}_{\mathrm{I},j}^\mathrm{H}\right]=\frac{1}{N_{\mathrm{DL}}}\sum_{n=1}^{N_{\mathrm{DL}}}\lambda_{\mathrm{I},k,n}.$
And then, 
\begin{footnotesize}
\begin{equation}\label{covyy}
\begin{aligned}
&\mathrm{cov}(\tilde{\mathbf{y}}_i^p,	\tilde{\mathbf{y}}_i^p)=\mathbb{E}\left[ \tilde{\mathbf{y}}_i^p(\tilde{\mathbf{y}}_i^p)^\mathrm{H}\right] \\
&=p_{\mathrm{DP}}\mathbf{A}^2(\mathbb{E}\left[ \mathbf{f}_i\mathbf{f}_i^\mathrm{H}\right]\!+\!\mathbf{I})\!+\!\mathbf{A}(\mathbf{I}\!-\!\mathbf{A})\mathrm{diag}\left( p_{\mathrm{DP}}\mathbb{E}\left[ \mathbf{f}_i\mathbf{f}_i^\mathrm{H}\right]\!+\!\sigma_{\mathrm{UP}}^2\mathbf{I}\right) \\
&=p_{\mathrm{DP}}\mathbf{A}\bm{\Lambda}+\mathbf{A}\sigma_{\mathrm{UP}}^2.
\end{aligned}
\end{equation}
\end{footnotesize}
 
Because $\mathrm{cov}(\tilde{\mathbf{y}}_i^p, \tilde{\mathbf{y}}_i^p)^{-1/2}\tilde{\mathbf{y} }_i^p$ can be equivalent to a small-scale fading following the distribution of i.i.d. $\mathcal{CN}(0,\mathbf{I})$, then $\mathrm{cov}(\mathbf{f}_i,\tilde {\mathbf{y}}_i^p)\mathrm{cov}(\tilde{\mathbf{y}}_i^p, \tilde{\mathbf{y}}_i^p)^{-1/2} $ can be equivalent to large-scale fading. After simple derivation, Eq.~(\ref{delta}) is derived.
\end{proof}

\section{Proof of Theorem 1}\label{appendix 4}
\begin{proof}
The derivation of DL rate expression is similar for MRT and ZFT pre-coding. Firstly, for Eq.~(\ref{Rdl}), we use common approximation as $\mathbb{E}\left[ \mathrm{log}_2\left( 1+\frac{{X}}{{Y}}\right) \right] \approx\mathrm{log}_2\left( 1+\frac{\mathbb{E}\left[ {X}\right] }{\mathbb{E}\left[ {Y}\right] }\right)$, which is tight when the number of antennas tends to infinity, and $X,Y$ are sums of non-negative random variables and converge to their mean due to the law of large numbers. Thus,
\begin{small}
\begin{equation}\label{Rdl2}
R_{\mathrm{DL},k}\!\approx\!\mathrm{log}_2\left( 1\!+\!\frac{\mathbb{E}\left[ \left| \xi_k\sqrt{p_{\mathrm{DL}}}\hat{\mu}_{k,k}s_k\right|^2\right] }{\mathbb{E}\left[ \phi_{\mathrm{DL},k}\!+\!\psi_{\mathrm{DL},k}\!+\!\rho_{\mathrm{DL},k}\!+\!\widetilde{\mathbf{C}}_{n_{\mathrm{DL},k}}\right] }\right). 
\end{equation}
\end{small}

Calculate its numerator as
\begin{equation}\label{dlnume}
\begin{aligned}
&\mathbb{E}\left[ \left| \xi_k\sqrt{p_{\mathrm{DL}}}\hat{\mu}_{k,k}s_k\right|^2\right]=\xi_k^2p_{\mathrm{DL}}\mathbb{E}\left[ \left| \hat{\mu}_{k,k}\right| ^2\right] \\
&=\xi_k^2p_{\mathrm{DL}}\left[ \left( E_k^\mathrm{pre}\right)^2+\frac{p_{\mathrm{DP}}(\bar{\chi}_k^\mathrm{pre})^2}{p_{\mathrm{DP}}\widetilde{\chi}_k^\mathrm{pre}+\sigma_{\mathrm{DP}}^2} \right]. 
\end{aligned}
\end{equation}

Then, calculate the first term of denominator
\begin{equation}\label{dldeno1}
\begin{aligned}
&\mathbb{E}\left[ \phi_{\mathrm{DL},k}\right]=\xi_k^2\sum\nolimits_{i\neq k}p_{\mathrm{DL}}\mathbb{E}\left[ \left| \hat{\mu}_{k,i}\right| ^2\right]\\
&=\xi_k^2\sum\nolimits_{i\neq k}p_{\mathrm{DL}}\frac{\xi_kp_{\mathrm{DP}}\left( \chi_k^\mathrm{pre}\right)^2 }{p_{\mathrm{DP}}\chi_k^\mathrm{pre}+\sigma_{\mathrm{DP}}^2}.
\end{aligned}
\end{equation}

The second term of denominator is obtaine as 
\begin{equation}\label{dldeno2}
\begin{aligned}
&\mathbb{E}\left[ \psi_{\mathrm{DL},k}\right] =\mathbb{E}\left[ \left| \xi_k\sum\nolimits_{i=1}^{K_{\mathrm{DL}}}\sqrt{p_{\mathrm{DL}}}\tilde{\mu}_{k,i}s_i\right|^2\right] \\
&=\xi_k^2\sum\nolimits_{i=1}^{K_{\mathrm{DL}}}{p_{\mathrm{DL}}}\left( \mathbb{E}\left[ \left| {\mu}_{k,i}\right| ^2\right]-\mathbb{E}\left[ \left| \hat{\mu}_{k,i}\right| ^2\right]\right). 
\end{aligned}
\end{equation}

When $i=k$, denote $\Delta \mathbb{E}[\mu]= \mathbb{E}\left[ \left| {\mu}_{k,k}\right| ^2\right]-\mathbb{E}\left[ \left| \hat{\mu}_{k,k}\right| ^2\right]$
\begin{equation}\label{dldeno21}
\Delta \mathbb{E}[\mu]=\frac{\left( 1-\xi_k\right)p_{\mathrm{DP}}\bar{\chi}_k^\mathrm{pre}\left( E_k^\mathrm{pre}\right)^2+\bar{\chi}_k^\mathrm{pre}\sigma_{\mathrm{DP}}^2 }{p_{\mathrm{DP}}\bar{\chi}_k^\mathrm{pre} \! \!+(1 \! -\!\xi_k)p_{\mathrm{DP}}(E_k^\mathrm{pre})^2\!+\sigma_{\mathrm{DP}}^2}.
\end{equation}

When $i\neq k$,
\begin{equation}\label{dldeno22}
\Delta \mathbb{E}[\mu]=\frac{\left( 1-\xi_k\right)p_{\mathrm{DP}}\left( \chi_k^\mathrm{pre}\right) ^2+\sigma_{\mathrm{\mathrm{DP}}}^2\chi_k^\mathrm{pre} }{p_{\mathrm{DP}}\chi_k^\mathrm{pre}+\sigma_{\mathrm{DP}}^2}.
\end{equation}

Substituting Eqs.~(\ref{dldeno21}) and (\ref{dldeno22}) into Eq.~(\ref{dldeno2}), we obtain
\begin{equation}\label{dldeno2final}
\begin{aligned}
&\mathbb{E}\left[ \psi_{\mathrm{DL},k}\right]=\xi_k^2{p_{\mathrm{DL}}}\frac{\left( 1-\xi_k\right)p_{\mathrm{DP}}\bar{\chi}_k^\mathrm{pre}\left( E_k^\mathrm{pre}\right)^2+\bar{\chi}_k^\mathrm{pre}\sigma_{\mathrm{DP}}^2 }{p_{\mathrm{DP}}\bar{\chi}_k^\mathrm{pre}+(1-\xi_k)p_{\mathrm{DP}}(E_k^\mathrm{pre})^2+\sigma_{\mathrm{DP}}^2}\\
&+\xi_k^2\sum_{i\neq k}p_{\mathrm{DL}}\frac{\left( 1-\xi_k\right)p_{\mathrm{DP}}\left( \chi_k^\mathrm{pre}\right) ^2+\sigma_{\mathrm{DP}}^2\chi_k^\mathrm{pre} }{p_{\mathrm{DP}}\chi_k^\mathrm{pre}+\sigma_{\mathrm{DP}}^2}.
\end{aligned}
\end{equation}
We get the third term of denominator as
\begin{equation}\label{dldeno3}
\begin{aligned}
\mathbb{E}[| \xi_k\sum_{j=1}^{K_{\mathrm{UL}}}\sqrt{p_{\mathrm{UL}}}u_{\mathrm{I},k,j}x_j|^2]\!=\!\xi_k^2 \sum_{j=1}^{K_{\mathrm{UL}}}p_{\mathrm{UL}}\lambda_{\mathrm{I},k,j},
\end{aligned}
\end{equation}
and the last term of denominator as
\begin{equation}\label{dldeno4}
\mathbf{\widetilde{C}}_{n_{\mathrm{DL},k}}=\xi_k\left( 1-\xi_k\right)\mathbb{E}\left[ \mathrm{diag}\left( r_{\mathrm{DL},k}r_{\mathrm{DL},k}^\mathrm{H}\right) \right],
\end{equation}
where $\mathbb{E}\left[ \mathrm{diag}\left( r_{\mathrm{DL},k}r_{\mathrm{DL},k}^\mathrm{H}\right) \right]=\sum_{i=1}^{K_{\mathrm{DL}}}{p_{\mathrm{DL}}}\mathbb{E}\left[ \left| {\mu}_{k,i}\right| ^2\right]+\sum_{j=1}^{K_{\mathrm{UL}}}{p_{\mathrm{UL}}}\mathbb{E}\left[ \left| u_{\mathrm{I},k,j}\right| ^2\right]+\sigma_{\mathrm{DL}}^2 $. Substituting Eqs.~(\ref{dlnume}-\ref{dldeno4}) into Eq.~(\ref{Rdl}), we can obtain the result in Eqs.~(\ref{Rdlpre}) and (\ref{dlparts}).

\end{proof}
\section{Proof of Theorem 3}\label{appendix 5}
\begin{proof}
Firstly, dividing the numerator and denominator of the SINR in Eq.~(\ref{Rul}) by $\left\| \mathbf{v}_k\right\|^2 $, the rate expression is approximated as Eq.~(\ref{Rulkapprox}) at the top of next page. For MRC receiver, substitute $\mathbf{v}_k=\hat{\mathbf{g}}_{\mathrm{UL},k}$ into Eq.~(\ref{Rulkapprox}). We have the numerator as $M\Xi_{n,k} =M\sum_{n=1}^{N_{\mathrm{UL}}}\alpha_n^2\beta_{\mathrm{UL},n,k}$, where 
\begin{small}
\begin{subequations}
\begin{align}
&\left\| \mathbf{A}\hat{\mathbf{g}}_{\mathrm{UL},k}\right\|^2=\sum_{n=1}^{N_{\mathrm{UL}}}\left\| \alpha_n\hat{\mathbf{g}}_{\mathrm{UL},n,k} \right\| ^2\sim\Gamma\left(M \hat{k}_{\mathrm{UL},k},\hat{\theta}_{\mathrm{UL},k}\right), \\
&\hat{k}_{\mathrm{UL},k}=\frac{\Xi_{n,k}^2 }{\sum_{n=1}^{N_{\mathrm{UL}}}\alpha_n^4\beta_{\mathrm{UL},n,k}^2},	\hat{\theta}_{\mathrm{UL},k}=\frac{\sum_{n=1}^{N_{\mathrm{UL}}}\alpha_n^4\beta_{\mathrm{UL},n,k}^2}{\Xi_{n,k}}.
\end{align}
\end{subequations}
\end{small}
\begin{figure*}[htpb]
\begin{small}
		\begin{equation}\label{Rulkapprox}
		\begin{aligned}
		&R_{\mathrm{UL},k}\approx \mathrm{log}_2\left( 1+\frac{\mathbb{E}\left[ p_{\mathrm{UL}}\left| \mathbf{v}_k^\mathrm{H}/\left\| \mathbf{v}_k\right\|\mathbf{A}\hat{\mathbf{g}}_{\mathrm{UL},k}\right|^2 \right] } {\mathbb{E}\left[ \phi_{\mathrm{UL},k}^{'}+\psi_{\mathrm{UL},k}^{'}+\rho_{\mathrm{UL},k}^{'}+\left\|\mathbf{v}_k^\mathrm{H}/\left\| \mathbf{v}_k\right\|\mathbf{A} \right\|^2\sigma_{\mathrm{UL}}^2+\mathbf{v}_k^\mathrm{H}/\left\| \mathbf{v}_k\right\|\mathbf{R}_{\mathbf{n}_{q,\mathrm{UL}}}\mathbf{v}_k^\mathrm{H}/\left\| \mathbf{v}_k\right\|\right] }\right).
		\end{aligned}
		\end{equation}
\end{small}
	\hrulefill
\end{figure*}

Next, for the first term of denominator, the receiver vectors with MRC are placed in $N_{\mathrm{UL}}M$-dimensional space. Any vector independent of the channel vector is placed in a one-dimensional space. $\hat{\mathbf{g}}_{\mathrm{UL},i}$ is independent of $\hat{\mathbf{g}}_{\mathrm{UL},k}$. Hence,
\begin{equation}\label{Ruldeno1}
\mathbb{E}\left[ \phi_{\mathrm{UL},k}^\mathrm{MRC'}\right]=\frac{1}{N_{\mathrm{UL}}}\sum_{n=1}^{N_{\mathrm{UL}}}p_{\mathrm{UL}}\alpha_n^2\beta_{\mathrm{UL},n,i}.
\end{equation}

And the other terms of denominator can be expressed as
\begin{small}
\begin{equation}\label{Ruldeno4}
\begin{aligned}
&\mathbb{E}\left[ \psi_{\mathrm{UL},k}^\mathrm{MRC'}\right] 
=\frac{1}{N_{\mathrm{UL}}}\sum_{i=1}^{K_{\mathrm{UL}}}p_{\mathrm{UL}}\sum_{n=1}^{N_{\mathrm{UL}}}\alpha_n^2\eta_{\mathrm{UL},n,i}.\\
&\mathbb{E}\left[ \rho_{\mathrm{UL},k}^\mathrm{MRC'}\right] 
=\frac{1}{N_{\mathrm{UL}}}\sum_{i=1}^{K_{\mathrm{UL}}}p_{\mathrm{UL}}\sum_{n=1}^{N_{\mathrm{UL}}}\alpha_n^2\rho_{\mathrm{UL},n,j}^2.\\
&\mathbb{E}\left[ \left\|\frac{\mathbf{v}_k^\mathrm{H}}{\left\| \mathbf{v}_k\right\|}\mathbf{A} \right\|^2\sigma_{\mathrm{UL}}^2\right]=\frac{1}{N_{\mathrm{UL}}}\sigma_{\mathrm{UL}}^2\sum_{n=1}^{N_{\mathrm{UL}}}\alpha_n^2.
\end{aligned}
\end{equation}
\end{small}

Finally, substitute Eqs.~(\ref{Ruldeno1}-\ref{Ruldeno4}) into Eq.~(\ref{Rulkapprox}) for the UL rate closed-form expression under MRC receiver. With ZFR, the receiver vector is placed in $N_{\mathrm{UL}}M-K_{\mathrm{UL}}+1$ dimensional space. And the inter-user interference can be eliminated as
\begin{equation}
\begin{aligned}
\mathbb{E}\left[ \phi_{\mathrm{UL},k}^\mathrm{ZF'}\right] 
&=\sum_{i\neq k}p_{\mathrm{UL}}\mathbb{E}\left| \frac{\hat{\mathbf{a}}_{\mathrm{UL},k}^\mathrm{H}}{\left\| \hat{\mathbf{a}}_{\mathrm{UL},k}\right\|}\mathbf{A}\hat{\mathbf{g}}_{\mathrm{UL},i}\right|^2=0.
\end{aligned}
\end{equation}
Others are similar to MRC and we omit the proof of ZFR due to the limitation of space.
\end{proof}
\end{appendices}
\vspace{-0.25cm}
\bibliographystyle{IEEEtran}
\bibliography{IEEEabrv,mybib}
\vspace{-2cm}
\begin{IEEEbiography}[{\includegraphics[width=1in,height=1.25in,clip,keepaspectratio]{ 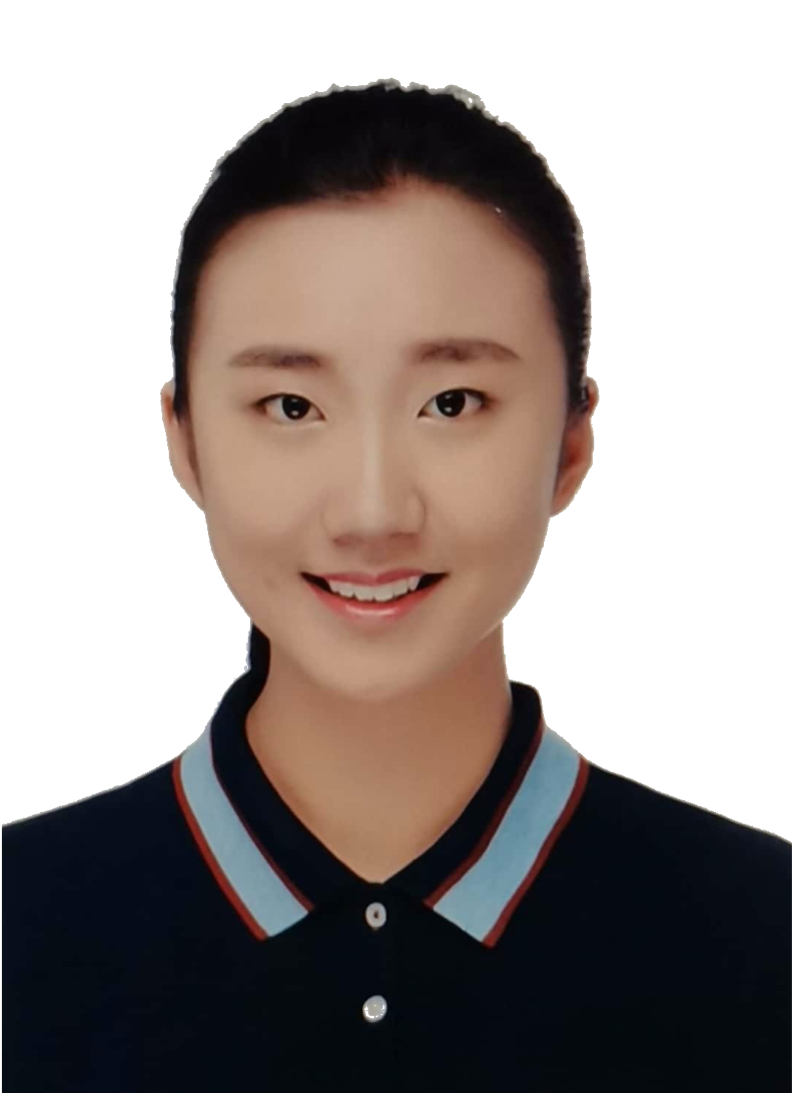}}]{Xiangning Song}was born in Shandong province, China, in 1999. She received the B.S. degree (\emph{summa cum laude}) in information engineering from Southeast University, Nanjing, China, in 2021. She is currently pursuing the master degree in communication and information system at the National Mobile Communications Research Laboratory, Southeast University. Her research interests include distributed massive MIMO, low-resolution ADCs, and cooperative communications.\\
\end{IEEEbiography}
\vspace{-1.75cm}
\begin{IEEEbiography}[{\includegraphics[width=1in,height=1.25in,clip,keepaspectratio]{ 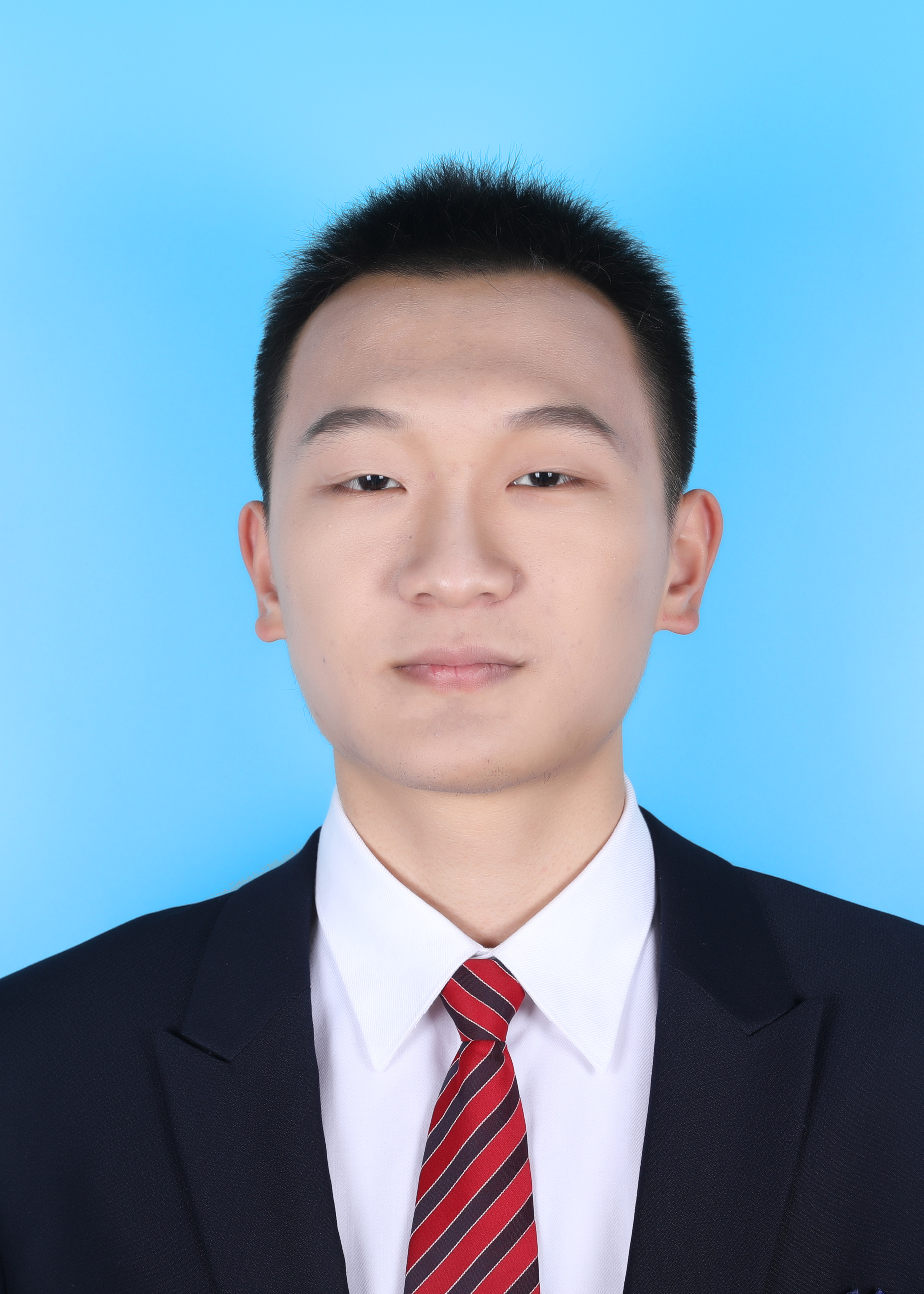}}]{Zhenhao Ji}
(S'19) received the B.S. degree in information engineering from Southeast University, Nanjing, China, in 2021. He is currently pursuing the M.S. degree in communication system at the National Mobile Communications Research Laboratory, Southeast University. He was the World Winner of 2019-2020 IEEE CASS Student Design Competition. His research interests includes AI in communication and VLSI design for digital signal processing.
\end{IEEEbiography}
\vspace{-1.75cm}
\begin{IEEEbiography}[{\includegraphics[width=1in,height=1.25in,clip,keepaspectratio]{ 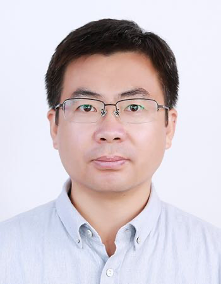}}]{Jiamin Li} received the B.S. and M.S. degrees in communication and information systems from Hohai University, Nanjing, China, in 2006 and 2009, respectively, and the Ph.D. degree in information and communication engineering from Southeast University, Nanjing, China, in 2014. He joined the National Mobile Communications Research Laboratory, Southeast University, in 2014, where he has been an Associate Professor since 2019. His research interests include cell-free distributed massive MIMO, massive ultra-reliable low-latency communications (mURLLC), artificial intelligence and its applications in future mobile communications.\\
\end{IEEEbiography}
\vspace{-1.75cm}
\begin{IEEEbiography}[{\includegraphics[width=1in,height=1.25in,clip,keepaspectratio]{ 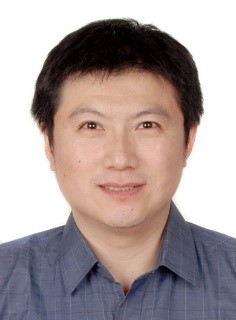}}]{Pengcheng Zhu} received the B.S and M.S. degrees in electrical engineering from Shandong University, Jinan, China, in 2001 and 2004, respectively, and the Ph.D. degree in communication and information science from the Southeast University, Nanjing, China, in 2009. He has been a lecturer with the national mobile communications research laboratory, Southeast University, China. His research interests lie in the areas of communication and signal processing, including limited feedback techniques, and distributed antenna systems.\\
\end{IEEEbiography}
\vspace{-1.75cm}
\begin{IEEEbiography}[{\includegraphics[width=1in,height=1.25in,clip,keepaspectratio]{ 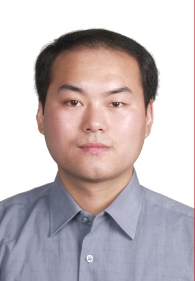}}]{Dongming Wang} (M'06) received the B.S. degree from Chongqing University of Posts and Telecommunications, Chongqing, China, the M.S. degree from Nanjing University of Posts and Telecommunications, Nanjing, China, and the Ph.D. degree from the Southeast University, Nanjing, China, in 1999, 2002, and 2006, respectively. He joined the National Mobile Communications Research Laboratory, Southeast University, in 2006, where he has been an Associate Professor since 2010. His research interests include turbo detection, channel estimation, distributed antenna systems, and large-scale MIMO systems.
\end{IEEEbiography}
\vspace{-1.75cm}
\begin{IEEEbiography}[{\includegraphics[width=1in,height=1.25in,clip,keepaspectratio]{ 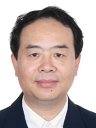}}]{Xiaohu You} (F'11) received the B.S., M.S. and Ph.D. degrees in electrical engineering from Nanjing Institute of Technology, Nanjing, China, in 1982, 1985, and 1989, respectively. From 1987 to 1989, he was with Nanjing Institute of Technology as a Lecturer. From 1990 to the present time, he has been with Southeast University, first as an Associate Professor and later as a Professor. His research interests include mobile communications, adaptive signal processing, and artificial neural networks with applications to communications and biomedical engineering. He is the Chief of the Technical Group of China 3G/B3G Mobile Communication R \& D Project. He received the excellent paper prize from the China Institute of Communications in 1987 and the Elite Outstanding Young Teacher Awards from Southeast University in 1990, 1991, and 1993. He was also a recipient of the 1989 Young Teacher Award of Fok Ying Tung Education Foundation, State Education Commission of China.
\end{IEEEbiography}

\end{document}